\newtheorem{theorem}{Theorem}[section]
\newtheorem{lemma}[theorem]{Lemma}
\theoremstyle{definition}
\theoremstyle{remark}
\newtheorem{remark}[theorem]{Remark}
\numberwithin{equation}{section}
\begin{document}

\title[Spectral Analysis of Nonstationary Anderson Model Operators]{Spectral Analysis of Lattice Schrödinger-Type Operators Associated with the Nonstationary Anderson Model and Intermittency}



\author{Dan Han}
\address{Department of Mathematics,
University of Louisville}
\email{dan.han@louisville.edu}

\author{Stanislav Molchanov}
\address{Department of Mathematics and Statistics, University of North Carolina at Charlotte}
\email{smolchan@charlotte.edu}

\author{Boris Vainberg}
\address{Department of Mathematics and Statistics, University of North Carolina at Charlotte}
\email{brvainbe@charlotte.edu}

\subjclass[2020]{60H25;60H15;81Q10;37H15;35B40}

\date{}

\dedicatory{}

\begin{abstract}
The research explores a high irregularity, commonly referred to as intermittency, of the solution to the non-stationary parabolic Anderson problem:
\begin{equation*}
  \frac{\partial u}{\partial t} = \varkappa \mathcal{L}u(t,x) + \xi_{t}(x)u(t,x)
\end{equation*}
with the initial condition \(u(0,x) \equiv 1\), where \((t,x) \in [0,\infty)\times \mathbb{Z}^d\). Here, \(\varkappa \mathcal{L}\) denotes a non-local Laplacian, and \(\xi_{t}(x)\) is a correlated white noise potential. The observed irregularity is intricately linked to the upper part of the spectrum of the multiparticle Schr\"{o}dinger equations for the moment functions \(m_p(t,x_1,x_2,\cdots,x_p) = \langle u(t,x_1)u(t,x_2)\cdots u(t,x_p)\rangle\).

In the first half of the paper, a weak form of intermittency is expressed through moment functions of order $p\geq 3$ and established for a wide class of operators $\varkappa \mathcal{L}$ with a positive-definite correlator $B=B(x))$ of the white noise. In the second half of the paper, the strong intermittency is studied. It relates to the existence of a positive eigenvalue for the lattice Schrödinger type operator with the potential $B$.
This operator is associated with the second moment $m_2$.  Now $B$ is not necessarily positive-definite, but $\sum B(x)\geq 0$.
\end{abstract}

\maketitle


\section{Introduction}\label{sec:introduction}

This paper presents novel advancements in the study of nonstationary lattice Anderson parabolic problems, building upon two recent works \cite{dan2023non} \cite{molchanov2022positive} and memoir \cite{carmona1994parabolic}. The central object of \cite{dan2023non} is the nonstationary lattice Anderson parabolic problem
\begin{equation}\label{andersonmodel}
  \begin{aligned}
      \frac{\partial u}{\partial t} &=\varkappa \mathcal{L}u(t,x)+\xi_{t}(x)u(t,x)\\
       u(0,x) &\equiv 1, \ (t,x) \in [0,\infty)\times Z^d.
  \end{aligned}
\end{equation}
with non-local Laplacian 
\begin{equation}\label{nonlocal Laplacian}
\mathcal{L}f(t,x)=\sum\limits_{\substack{z\neq 0\\ z\in Z^d}}a(z)(f(t,x+z)-f(t,x)).
\end{equation}

The operator $\mathcal{L}$ is the generator of the random walk $x(t)$. The random walk $x(t)$ has the following structure. It spends in each site $x \in Z^d$ the time $\tau_x$ which is exponential distributed with coefficient $\varkappa$, i.e., $P(\tau_x >t) =  e^{-\varkappa t}$, and it jumps at the moment $\tau_x+0$ from site $x$ to site $x+z$ with probability $a(z)$. We assume $a(z)=a(-z)\geq 0$ (symmetry), $\sum\limits_{ z \in Z^d }a(z)=1$ (normalization), $a(0)=0$, $a(z)>0$ if $|z|=1$ (to guarantee non-degeneracy of the random walk). The constant $\varkappa$ in (\ref{andersonmodel}) is the diffusion coefficient (diffusivity).

The potential $\xi_t(x)$, $x\in Z^d$ is the Gaussian field which is $\delta_0$-correlated in time and correlated in space:
\begin{align}\label{wn}
<\xi_{t_1}(x_1)\xi_{t_2}(x_2)>=\delta_0(t_1-t_2)B(x_1-x_2)
\end{align}
where $\delta_0(t)$ is the standard Dirac delta function. It means that $\xi_t(x)$ is the white noise in time with space correlator $B(x_1-x_2), ~x_1,x_2\in Z^d$.

Our goal is to qualitatively analyze the field $u(t,x)$ for $t\to\infty, x\in Z^d$, and prove that $u$ is intermittent in space. At the physical level, the intermittency means that the main contribution to the moment functions of $u(t,x),~t\to\infty,$ is given by very sparse and very high picks \cite{foondun2009intermittence,zel1987intermittency}. The mathematical description and study of the intermittency are using Lyapunov exponents, properties of the correlation function $B$, and the multi-particle equation for higher order moments of $u$.

{\it Properties of the potential $\xi_t(x)$ and the correlation function $B$:} We can understand $\xi_t(x)$ as the generalized derivative of the Wiener process $W(t,x),t\geq 0,x\in Z^d$, That is $\xi_t(x)=\dot{W}(t,x)$ and equation (\ref{andersonmodel}) is understood as the infinite dimensional stochastic differential equations in It\'{o} sense in an appropriate weighted Hilbert space $L^2(Z^d,\mu)$ with measure $\mu$. The measure $\mu$ on $Z^d$ is selected based on the jumps distribution $a(z)$ $z\in Z^d$, see details in \cite{dan2023non}. We will use the notation $<\cdot>$ to represent the integration over the distribution of $\xi_t(x)=\dot{W}(t,x)$, and $E[\cdot]$ will be the expectation of the functional of the random walk $x(t)$ with the generator $\varkappa\mathcal{L}$. Formula (\ref{wn}) implies that $<W(t,x)>=0$ and $<W(t_1,x_1)W(t_2,x_2)>=\min(t_1,t_2)B(x_1-x_2)$.
We construct the correlated Wiener process $W(t,x)$ as a linear transform of  independent and identically distributed (i.i.d) standard Brownian motions $w(t,x)$, $x \in Z^d$ , $t\geq 0$ with weight kernel $b(\cdot)$:
\begin{align}
W(t,x) = \sum\limits_{z \in Z^d} b(x-z) w(t,z).
\end{align}
Assume that the kernel $b(\cdot)$ satisfies  $b(\cdot) \in l^1(Z^d)$, which implies $\sum\limits_{z\in Z^d}|b^2(z)|<\infty$ and
\begin{align}
B(x-y) &= < W^2(1,x-y) > = \sum\limits_{z\in Z^d}b(x-z)b(y-z)=\sum\limits_{z\in Z^d}b(x-y-z)b(-z)\label{B(x-y)about b1}\\
B(0) &= < W^2(1,0) > = \sum_{z \in Z^d} b^2(z)<\infty.\label{B(x-y)about b2}
\end{align}

The function $B(x-y)$ is the space correlation of stationary Gaussian field $W(t,x)$. Due to Bochner-Khinchin theorem, $B$ is a positively definite function, i.e., it has the representation
\begin{align}\label{b3}
B(x)=\frac{1}{(2\pi)^d}\int_{T^d} e^{ikx}\hat{B}(k)dk,  \quad T=[-\pi,\pi]^d,
\end{align}
where the spectral density $\hat{B}(k)=\sum\limits_{x\in Z^d}B(x)e^{ikx}\in C(T^d)$ and $\hat{B}(k)\geq0$. Indeed, since $b(x)\in l^1(Z^d)\bigcap l^2(Z^d)$, formulas (\ref{B(x-y)about b1}), (\ref{B(x-y)about b2}) imply that $\hat{B}(k)=|\hat{b}(k)|^2\geq 0$ where $\hat{b}(k)=\sum\limits_{x\in Z^d}e^{ikx}b(x),~k\in T^d$.

Note also that
\begin{align*}
\sum\limits_{x\in Z^d}B(x)=\frac{1}{(2\pi)^d}\int_{T^d}\sum\limits_{x\in Z^d}e^{ikx}\hat{B}(k)dk=\int_{T^d}\delta_0(x)\hat{B}(k)dk=\hat{B}(0)\geq 0.
\end{align*}

{\it Higher order moments and multi-particle equation:} Equation (\ref{andersonmodel}) can be rewritten in the integral form (in It\'{o} sense):
\begin{align}\label{ito equation}
u(t,x)=1+\int_0^t\varkappa\mathcal{L}u(s,x)ds+\int_0^tu(s,x)dW(s,x).
\end{align}
The existence and uniqueness theorems for stochastic differential equation (\ref{ito equation}) are given in the paper \cite{dan2023non}. One also can find there 
  the derivations of the equations for the $p$-th moment functions $m_p(t,x_1,x_2,\cdots,x_p)=<u(t,x_1)\cdots u(t,x_p)>, ~p=1,2,\cdots$:
\begin{equation} \label{eq: p moment}
	\frac{d m_p(t, x_1, \cdots, x_p) }{dt} = \varkappa \left(\sum_{i=1}^p \mathcal{L}_{x_i} m_p(t, x_1, \cdots, x_p) \right) +V_p(x_1,\cdots,x_p) m_p(t, x_1, \cdots, x_p)=H_pm_p
\end{equation}
with the initial condition $m_p(0, x_1, \cdots, x_p) =1$. Here $ V_p(x_1,\cdots,x_p)= \sum\limits_{i <j} B(x_i-x_j)$.

The first two moments of the field $u(t,x)$ satisfy the following equations respectively:
\begin{align}
\frac{\partial m_1}{\partial t}=(\varkappa\mathcal{L})m_1(t,x), \quad ~
m_1(0,x)=1,
\end{align}
i.e., $m_1(t,x)=1$. If $m_2(t,x)=m_2(t,x+z,z)$, then
\begin{align}\label{2ndmoment_L}
\frac{\partial m_2(t,x)}{\partial t}=2\varkappa \mathcal{L}m_2(t,x)+B(x)m_2(t,x)=H_2m_2, \quad ~m_2(0,x)=1.
\end{align}

For the special case when $\varkappa\mathcal{L}$ is the local Laplacian operator $\varkappa\Delta$ and $B(x_1-x_2)=\delta(x_1-x_2)$, the derivations and analysis of the equations for the $p$th moment functions $m_p(t,x_1,x_2,\cdots,x_p)=<u(t,x_1)\cdots u(t,x_p)>, ~p=1,2,\cdots$ can be found in \cite{carmona1994parabolic}. In the latter case, there exists the exact formula for $m_2(t,x)$.

All equations for higher order moments have the form of multi-particle Schr\"{o}dinger equations containing the Hamiltonian $H_p$ as in the equation (\ref{eq: p moment}). 
The ``non-local Laplacian" $\varkappa\mathcal{L}$ is a symmetric bounded and self-adjoint operator in $l^2(Z^d)$ with the dot product $(f,g)=\sum\limits_{z\in Z^d}f(x)\bar{g}(x)$. It is isomorphic to the operator of multiplication by the continuous non-positive functions, the Fourier symbol, $\varkappa\hat{\mathcal{L}}(k)=\varkappa\sum\limits_{z\in Z^d}(\cos(kz)-1)a(z)=\varkappa(\hat{a}(k)-1),~k \in T^d=[-\pi,\pi]^d$, in the dual Fourier space $L^2(T^d, dk)$.
As a result, the spectrum $Sp(\varkappa\mathcal{L})$ of $\varkappa\mathcal{L}$ coincides with $Range(\varkappa\hat{\mathcal{L}}(k))=[\varkappa \alpha,0], ~\alpha=\min\limits_{k\in T^d}\hat{\mathcal{L}}(k)=\min\limits_{k\in T^d}(\hat{a}(k)-1)<0$.

{\it Spectral analysis of $H_p$ and intermittency:} The notion of intermittency presented in the physical literature refers to a high irregularity of the fields, see \cite{avellaneda1992renormalization,foondun2009intermittence,she1994universal,zel1987intermittency}.  In the case of the solar magnetic field, the intermittency pertains to the presence of high and narrow peaks that encapsulate nearly all magnetic energy. In mathematical literature \cite{dan2023non,gaudreau2023moment,getan2017intermittency,greven1991population,koralov2013structure,lyu2023almost,zel1987intermittency}, this notion is described in terms of the Lyapunov exponents $\gamma_p$ for the moments $m_p(t,\dot)$ defined in (\ref{eq: p moment}). Estimates on $\gamma_p$  depend on the structure of the spectral measure for the Hamiltonian $H_p$, $p\geq 2$, near the top of the spectrum. 

Due to the stationarity of the field $u(t,x),x\in Z^d, $ the moments do not depend on $x$: $m_p(t)=<u^p(t,x)>=<u^p(t,0)>$. We will prove (see section 2) that the following Lyapunov exponents exist:
\begin{align*}
\gamma_1=\lim\limits_{t\rightarrow \infty} \frac{\ln <u(t,0)>}{t}=\lim\limits_{t\rightarrow \infty}\frac{\ln m_1(t)}{t}=0\\
\gamma_p=\lim\limits_{t\rightarrow \infty}\frac{\ln<u^p(t,0)}{t}=\lim\limits_{t\rightarrow\infty}\frac{\ln m_p(t)}{t}, p\geq 2,
\end{align*}
and that $0= \gamma_1\leq \frac{\gamma_2}{2}\leq \frac{\gamma_3}{3}\leq \cdots$ (Lyapunov type inequality). If for some index $\rho_0$, we have
\begin{align*}
0=\gamma_1=\frac{\gamma_2}{2}=\cdots=\frac{\gamma_{p_0}}{p_0}<\frac{\gamma_{p_0+1}}{p_0+1},
\end{align*}
then for $k\geq p_0+1$, we have $\frac{\gamma_k}{k}<\frac{\gamma_{k+1}}{k+1}$ and this fact is the manifestation of intermittency \cite{den2012random,zel1987intermittency}. If $\frac{\gamma_2}{2}>0$, i.e., $p_0=1$, then \textit{the intermittency is strong}; if $p_0>1$, \textit{the intermittency is weak}. The qualitative meaning of intermittency will be explained in the remark in section \ref{sec:intermittency}.

The main part of the paper contains two sections. In Section \ref{sec:intermittency}, we establish the weak intermittency of the solution $u(t,x)$. This is deduced from simple upper and lower estimations of the moments $m_p(t)$ for $p=2,3,\cdots$, that imply the following estimates on the Lyapunov exponents $\gamma_p$:

\begin{align}
\frac{p-1}{2}B(0) - \varkappa d \leq \frac{\gamma_p}{p} \leq \frac{(p-1)B(0)}{2}.
\end{align}

The second part of Section \ref{sec:intermittency} contains improved upper and lower estimates for $\frac{\gamma_p}{p},~p\geq 2$ in terms of $\gamma_2$. These results reveal that the critical value $p_0=p_0(\varkappa)$, defined as the minimal $p_0$ such that $0<\frac{\gamma_{p_0}}{p_0}$, has order $ O(\varkappa)$ for large $\varkappa$.

The final section \ref{sec:spectral analysis} focuses on the examination of the second moment and the second Lyapunov exponent $\gamma_2$. In the transient case, under minimal conditions on the generator $\varkappa\mathcal{L}$ and the correlator $B(x)$, we establish the existence of a phase transition:  $\gamma_2(\varkappa)=0$ if $\varkappa<\varkappa_0$; $\gamma_2(\varkappa)>0$ if $\varkappa>\varkappa_0$ for appropriate $\varkappa_0$ if $\sum\limits_{x\in Z^d}B(x) > 0$. In the recurrent case, we consider a very broad class of operators and show that $\gamma_2(\varkappa)>0$ for any $\varkappa>0$. The positivity of $\gamma_2(\varkappa)$ is established for an arbitrary, not necessarily Markovian pseudodifferential operator $\varkappa \mathcal{L}$. We do not require the potential $B$ to be positive-definite.  The main assumption on the potential is that $\sum\limits_{x\in Z^d}B(x) > 0$. In a specific case, we demonstrate that $\gamma_2(\varkappa) > 0$ for $\varkappa > 0$, even in the borderline scenario where $\sum\limits_{x\in Z^d}B(x) = 0$.

\section{Intermittency of the Anderson parabolic problem}\label{sec:intermittency} 
The intermittency phenomenon is associated with the progressive growth of moment functions $m_p(t,x_1,x_2,\cdots,x_d)=<u(t,x_1)\cdots u(t,x_p)>$ as $t$ tends to infinity. Equations (\ref{eq: p moment}) for $m_p(t,x_1,\cdots,x_d)$ have been derived 
in \citep{dan2023non}, but the analysis of the equation has been limited to the cases where $p=1,2$. Here we extend the study by providing several estimations for $m_p(\cdot,\cdot)$, specifically for $p\geq 3$, shedding light on the phenomenon of intermittency.

The existence-uniqueness theorem for stochastic differential equations governing $u(t,x)$ in a weighted Hilbert space, along with the Itô formula, supports the derivation of equations (\ref{eq: p moment}). The solution $u(t,x)$ of the stochastic differential equation (\ref{andersonmodel}) is homogeneous and ergodic in space, and all its moment functions $m_p(t,x_1,x_2,\cdots,x_p)=<u(t,x_1)u(t,x_2)\cdots u(t,x_p)>$ are finite. Furthermore, these moment functions satisfy the $p$-particle type Schr\"{o}dinger equation (\ref{eq: p moment})
Since $\|\varkappa\mathcal{L}\|\leq \varkappa$ and $\max\limits_{x\in Z^d}|B(x)|=\max\limits_{x\in Z^d}B(x)=B(0)$, i.e., $|V_p|\leq \frac{p(p-1)}{2}B(0)$, the Hamiltonian $H_p$ in  (\ref{eq: p moment}) is bounded for any $p\geq 1$. 

Note also that since $u(t,x)>0$,
\begin{align} \label{m0}
m_p(t,x_1,x_2,\cdots,x_p)&=<u(t,x_1)u(t,x_2)\cdots u(t,x_p)>\\ \nonumber
&\leq \frac{1}{p}<u^p(t,x_1)+\cdots+u^p(t,x_p)>\\
\nonumber &=\frac{1}{p}\sum\limits_{j=1}^p<u^p(t,x_j)>=m_p(t,0,0,\cdots,0)
\end{align}

Due to Kac-Feynman formula,
\begin{align}\label{Kac Feynman for p particle}
m_p(t,0,\cdots,0)=E_{(0,0,\cdots,0)}e^{\int_0^t V_p(x_1(s),x_2(s),\cdots,x_p(s))ds}
\end{align}
where $x_j(s), j=1,2,\cdots,p,$ are $p$ independent random walks on $Z^d$ with generator $\varkappa\mathcal{L}$. Formula (\ref{Kac Feynman for p particle}) doesn't contain the pre-exponent $u(0,x_1(t),\cdots,x_p(t))$ due to the condition $u(0,x_1,x_2,\cdots,x_p)=1$. Future estimates are based on the formula (\ref{Kac Feynman for p particle}). 

Denote $M_p(t)=m_p(t,0,0,\cdots,0)=\max \limits_{x_1,x_2,\cdots,x_p} m_p(t,x_1,x_2,\cdots,x_p)$, see inequality (\ref{m0}). The max principle for the parabolic equation (\ref{eq: p moment}) and the independence of increments in time for the processes $x_j(t)$ lead to the inequality $M_p(t+s)\leq M_p(t)M_p(s)$, where $M_p\geq 1$. The subadditivity of $\ln M_p(t)$  implies the existence of the Lyapunov exponent, given by
\begin{align*}
\gamma_p=\lim\limits_{t\rightarrow \infty}\frac{\ln M_p(t)}{t}, \quad p=1,2,\ldots.
\end{align*}

It is known that $\gamma_p$ is the upper bound of the spectrum of the multiparticle Hamiltonian $H_p$, $p\geq 2$, see \cite{carmona1994parabolic}. Due to the Lyapunov inequality,

\begin{align}
0=\gamma_1\leq\frac{\gamma_2}{2}\leq \frac{\gamma_3}{3}\leq \cdots\leq \frac{\gamma_p}{p}\leq \cdots
\end{align}
Our goal now is to prove that the field $u(t,x)$ is intermittent in the sense of formal definitions from \cite{carmona1994parabolic,zel1987intermittency,den2012random,koralov2013structure}.

Let us revisit the concept of intermittency, briefly touched upon in the Introduction. 
In the following remark, we will try to explain why the formal mathematical definition of intermittency implies the high irregularity of the field $u(t,x)$. We will also stress the similarity between our scaler model and models for the Solar magnetic field \cite{den2012random,molchanov1994lyapunov,ruzmaikin1993random,yang2010intermittency}. 

\begin{remark}
The field $u(t,x)$, representing solutions to the Anderson parabolic problem, is both homogeneous and ergodic. The correlation functions of any order are finite. Specifically, it leads to the following ergodic theorem,
\begin{align}
<u^{p_0}(t,x)>=<u^{p_0}(t,0)>=\lim\limits_{L\rightarrow\infty}\frac{1}{(2L)^d}\sum\limits_{|y|\leq L}u^{p_0}(t,y)\stackrel{\log}{\sim}e^{\gamma_{p_0}t}, \quad t\to \infty.
\end{align}

Let $0<\delta<\gamma_{p_0}/p_0$. Chebyshev's inequality and the relation $<u(t,y)>\equiv 1,~y\in Z^d,$ imply
\begin{align*}
P(\frac{\ln u(t,y)}{t}\geq \frac{\gamma_{p_0}}{p_0}-\delta)=P(u(t,y)\geq  e^{t( \frac{\gamma_{p_0}}{p_0}-\delta)})\leq <u(t,y)>e^{-t(\frac{\gamma_{p_0}}{p_0}-\delta)}= e^{-\delta_1t},~~ \delta_1>0.
\end{align*}

For a fixed $\delta$ and a large $L$, consider the cube $Q_{L} = \{y:|y| < L\}$ which we can partition into the set $Q_{L}^{\delta} = \{y \in Q_{L}: u(t,y) \leq e^{-t(\frac{\gamma_{p_0}}{p_0}-\delta)}\}$ and the complement set containing large values of $u$. In a logarithmic sense, we have $\frac{1}{(2L)^d}\sum\limits_{Q_L^{\delta}}<u^{p_0}(t,x)>\leq e^{t(\gamma_{p_0}-p_0\delta)}$, which is significantly smaller than the logarithmic asymptotic $e^{\gamma_{p_0}t}$ of $<u^{p_0}(t,0)>$. This indicates that the primary contribution to $<u^{p_0}(t,0)>$ arises from the sparse high peaks where $u(t,x)>e^{-t(\frac{\gamma_{p_0}}{p_0}-\delta)}$. The density $\pi(t,\delta)=|Q_L\setminus Q_L^\delta|/|Q_L|$ of such points in $Q_L$, $L\rightarrow \infty$,  is exponentially small in $t$:
\begin{align*}
\pi(t,\delta)\leq e^{-t\delta_1}.
\end{align*}

A classic example of \textit{intermittent field} is the magnetic field of the Sun and a diverse array of white and yellow stars. The energy of magnetic field $\vec{H}(t,x)$ is proportional to the 2nd moment $<\vec{H}(t,x_1)\vec{H}(t,x_2)>$. The Lyapunov exponents of $\vec{H}(x)$ satisfy the same relation $0=\gamma_1<\frac{\gamma_2}{2}$ for hot stars including the Sun. 
Nearly all magnetic energy of $\vec{H}(x)$ is concentrated within the black spots, covering extremely small regions of the Sun's surface, as discussed in \cite{zel1987intermittency}. The lattice parabolic Anderson model, introduced in \cite{carmona1994parabolic}, serves as a simplified representation of the Maxwell equations governing the magnetic field in the turbulent flow of solar plasma.
    
\end{remark}

\begin{theorem}
The field $u(t,x)$, representing the solution to the equation (\ref{andersonmodel}) as an infinite-dimensional stochastic differential equation (SDE) in the appropriate weighted Hilbert space, is \textit{weak intermittent} as $t\rightarrow\infty$. This holds without any additional restrictions on the jump distribution $a(x),~x\in Z^d$, and the correlator $B(\cdot)$, except as outlined in the introduction.
\end{theorem}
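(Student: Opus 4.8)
The plan is to deduce the theorem from the two elementary two-sided estimates
\[
\frac{p-1}{2}B(0)-\varkappa d\ \le\ \frac{\gamma_p}{p}\ \le\ \frac{(p-1)B(0)}{2},\qquad p=2,3,\dots,
\]
combined with the Lyapunov monotonicity $\gamma_p/p\le\gamma_{p+1}/(p+1)$ already recorded before the statement. Throughout we may assume the noise is nondegenerate, $B\not\equiv 0$ (if $B\equiv 0$ then $\xi_t\equiv 0$, $u\equiv 1$, and there is nothing to prove); since $\hat B\ge 0$, Bochner's representation $(\ref{b3})$ gives $|B(x)|\le\frac{1}{(2\pi)^d}\int_{T^d}\hat B(k)\,dk=B(0)$ for all $x$, so in the nondegenerate case $B(0)>0$, and moreover $B(0)=\max_x B(x)=\max_x|B(x)|$.

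For the upper bound, $V_p(x_1,\dots,x_p)=\sum_{i<j}B(x_i-x_j)\le\binom{p}{2}B(0)$ pointwise, so the Kac--Feynman formula $(\ref{Kac Feynman for p particle})$ gives $M_p(t)\le e^{\binom{p}{2}B(0)t}$, whence $\gamma_p\le\binom{p}{2}B(0)$ and $\gamma_p/p\le\frac{p-1}{2}B(0)$. For the lower bound, I would restrict the expectation in $(\ref{Kac Feynman for p particle})$ to the event $A_t$ that none of the $p$ independent walks $x_j(\cdot)$ leaves the origin on $[0,t]$: this event has probability $e^{-p\varkappa t}$ (the holding time at a site is exponential of rate $\varkappa$, and the walks are independent), and on $A_t$ one has $V_p(x_1(s),\dots,x_p(s))\equiv\binom{p}{2}B(0)$ for all $s\le t$. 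Hence $M_p(t)\ge e^{\binom{p}{2}B(0)t}\,e^{-p\varkappa t}$, giving $\gamma_p/p\ge\frac{p-1}{2}B(0)-\varkappa$, which is even slightly stronger than the stated $\frac{p-1}{2}B(0)-\varkappa d$. It is worth emphasizing that the crude annealed (Jensen) bound $M_p(t)\ge\exp\!\big(\int_0^t\langle V_p(x_1(s),\dots,x_p(s))\rangle\,ds\big)$ does \emph{not} suffice: in the recurrent case that exponent is only $o(t)$, so it does not see the leading term $\tfrac{p-1}{2}B(0)$; localization at the origin is precisely what recovers it, uniformly over the recurrent/transient dichotomy.

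Finally I would conclude as follows. The Lyapunov inequality $\gamma_p/p\le\gamma_{p+1}/(p+1)$ (which itself comes from $M_p(t)^{1/p}\le M_{p+1}(t)^{1/(p+1)}$, i.e. monotonicity of $L^p$-norms of the nonnegative random variable $u(t,0)$, after taking logarithms, dividing by $t$, and letting $t\to\infty$) shows that the sequence $\gamma_p/p$ is nondecreasing; the lower bound with $B(0)>0$ shows it is unbounded. A nondecreasing unbounded sequence with $\gamma_1/1=0$ cannot be eventually constant, so there is a finite index $p_0\ge 1$ with $0=\gamma_1=\gamma_2/2=\cdots=\gamma_{p_0}/p_0<\gamma_{p_0+1}/(p_0+1)$; as recalled in the introduction this forces $\gamma_k/k<\gamma_{k+1}/(k+1)$ for all $k\ge p_0+1$, which is exactly the defining property of intermittency. (For large $\varkappa$, the refined estimates of the second part of the section will show $p_0$ is of order $\varkappa$, hence $p_0>1$ and the intermittency is genuinely weak; for small $\varkappa$ or in the recurrent regime $p_0$ can equal $1$, i.e. strong intermittency, which of course also implies the assertion.)

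The main obstacle is not a deep one: once the preliminary facts quoted above — existence of $\gamma_p=\lim_t t^{-1}\ln M_p(t)$ via subadditivity of $\ln M_p$, its identification with the top of $\mathrm{Sp}(H_p)$, and the Kac--Feynman representation $(\ref{Kac Feynman for p particle})$ — are in hand, the proof is a short comparison argument. The only point that genuinely requires care is that the lower estimate must capture the full growth rate $\tfrac{p-1}{2}B(0)$ in $p$, and not merely positivity of $\gamma_p$, because it is precisely the divergence $\gamma_p/p\to\infty$ that drives the conclusion and makes it hold with no restriction on the jump law $a(\cdot)$ or on $B(\cdot)$ beyond those in the introduction.
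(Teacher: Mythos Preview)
Your proposal is correct and follows essentially the same route as the paper: bound $V_p\le\binom{p}{2}B(0)$ for the upper estimate, localize all $p$ walks at the origin for the lower estimate, and conclude weak intermittency from the resulting two-sided bound together with Lyapunov monotonicity. The only discrepancy is the constant in the lower bound: the paper writes $\frac{p-1}{2}B(0)-\varkappa d$, whereas your $\frac{p-1}{2}B(0)-\varkappa$ is what the normalization $P(\tau_x>t)=e^{-\varkappa t}$ in the introduction actually gives; either version suffices for the qualitative conclusion.
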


\begin{proof}
Due to (\ref{m0}) and the maximum principle, 
\begin{align*}
M_p(t)=\max\limits_{x_1,x_2,\cdots,x_p}m_p(t,x_1,x_2,\cdots,x_p)=M_p(t,0,\cdot 0)\leq \exp\{\frac{p(p-1)}{2}B(0)t\},
\end{align*}

i.e.,
\begin{align}\label{gamma_p/p upper bound}
\frac{\gamma_p}{p}\leq \frac{p-1}{2}B(0), p=2,3,\cdots~.
\end{align}
At the same time,
\begin{align*}
M_p(t,0,\cdots,0)\geq E_{0,0,\cdots,0}[I_{\{y_1(s)=y_2(s)=\cdots=y_p(s)=0,s\in [0,t]\}}]e^{\frac{p(p-1)}{2}B(0)t}
\end{align*}
\begin{align*}
=\exp\{-p\varkappa dt+\frac{p(p-1)}{2}B(0)t\},
\end{align*}
i.e., 
\begin{align}\label{gamma_p/p lower bound}
\frac{\gamma_p}{p}\geq \frac{p-1}{2}B(0)-\varkappa d.
\end{align}

Thus,
\begin{align}
\frac{\gamma_p}{p}>0 \quad \,\,\text{if} \quad \,\,p>2+\frac{2\varkappa}{B(0)}.
\end{align}

This implies the weak intermittency of the field $u(t,x)$ and provides an estimate from below on the borderline value of $p=p_0$.

\end{proof}

Estimates (\ref{gamma_p/p lower bound}) and (\ref{gamma_p/p upper bound}) are very rough.  In the following, we will present significantly refined estimates for $\gamma_p$ with $p \geq 2$. The pivotal element in this refinement lies in the accurate estimation of the second moment $M_2(t)$ and the corresponding Lyapunov exponent $\gamma_2$.

\subsection{Upper estimate of Lyapunov exponents $\gamma_p$ through $\gamma_2$}
We will start with two technical lemmas. Let us introduce a little more general stochastic differential equation (\ref{ito equation}) where the white noise has an extra factor $\alpha$. Denote by $\tilde{u}(t,x),~\tilde{m}_p(t, \vec{x}),~\tilde{\gamma}_p$ the corresponding solution of (\ref{ito equation}) with factor $\alpha$, its moment functions and Lyapunov exponents. Here $ \vec{x}=(x_1,\cdots,x_p)$. In particular, function $\tilde{m}_p(t, \vec{x})$ satisfies 
\begin{align}\label{eq:alternative tilde m_p}
    \frac{d \tilde{m}_p(t, \vec{x}) }{dt} &= \varkappa \left(\sum_{i=1}^p \mathcal{L}_{x_i} \tilde{m}_p(t, \vec{x}) \right) +\alpha V_p(\vec{x}) \tilde{m}_p(t, \vec{x})=\tilde{H}_p\tilde{m}_p\\
    \tilde{m}_p(0, \vec{x}) &=1.
\end{align}

The solution to equation (\ref{eq:alternative tilde m_p}) is given by
\begin{align}\label{eq:alternative tilde m_p Kac Feynman}
\tilde{m}_p(t,\vec{x}) = E_{\vec{x}} \exp\left(\alpha\int_{0}^{t} V_p(t,\vec{x}_s)ds\right)
\end{align}
where $\vec{x}_s=(x_1(s),x_2(s),\cdots,x_p(s))$ is the family of independently random walks with the generator $\varkappa\mathcal{L}$. 
We will explicitly show the dependence of the Lyapunov exponent on two parameters: $\tilde{\gamma}_p=\tilde{\gamma}_p(\varkappa,\alpha)$.

\begin{lemma}\label{lem:two gamma relationship}
Under the aforementioned conditions, $\tilde{\gamma}_p(\varkappa,\alpha) = \alpha\gamma_p\left(\displaystyle\frac{\varkappa}{\alpha}\right)$.
\end{lemma}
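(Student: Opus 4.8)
The plan is to reduce everything to the Kac--Feynman representations (\ref{Kac Feynman for p particle}) and (\ref{eq:alternative tilde m_p Kac Feynman}) and to exploit the elementary scaling of the driving random walks under a linear time change. The only structural input I need is the following: if $(\vec{x}_s)_{s\ge 0}$ is a family of $p$ independent random walks with common generator $\varkappa\mathcal{L}$, then for any constant $c>0$ the time-changed process $(\vec{x}_{cs})_{s\ge 0}$ is again a family of $p$ independent random walks, now with generator $c\varkappa\mathcal{L}$. This holds because the transition semigroup satisfies $e^{cs\,\varkappa\mathcal{L}}=e^{s\,(c\varkappa)\mathcal{L}}$, so the two processes have the same finite-dimensional distributions; independence of the coordinates is preserved since the same time change is applied to each.

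First I would start from $\tilde{m}_p(t,\vec{0})=E_{\vec{0}}\exp\bigl(\alpha\int_0^t V_p(\vec{x}_s)\,ds\bigr)$ with $\vec{x}_s$ driven by $\varkappa\mathcal{L}$ (note that $V_p$ does not in fact depend on $t$, so no extra bookkeeping of a time variable inside the potential is required), and substitute $s=\sigma/\alpha$. The Jacobian $ds=d\sigma/\alpha$ cancels the prefactor $\alpha$, while the integration range $[0,t]$ becomes $[0,\alpha t]$, giving
$$\alpha\int_0^t V_p(\vec{x}_s)\,ds=\int_0^{\alpha t}V_p\bigl(\vec{x}_{\sigma/\alpha}\bigr)\,d\sigma .$$
By the scaling property above with $c=1/\alpha$, the process $(\vec{x}_{\sigma/\alpha})_{\sigma\ge 0}$ is distributed as a family of $p$ independent random walks with generator $(\varkappa/\alpha)\mathcal{L}$. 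Hence the expectation on the right-hand side is exactly the Kac--Feynman expression (\ref{Kac Feynman for p particle}) for the original model with diffusivity $\varkappa/\alpha$, evaluated at time $\alpha t$. In other words, $\tilde{m}_p(t,\vec{0};\varkappa,\alpha)=m_p(\alpha t,\vec{0})$ computed with $\varkappa$ replaced by $\varkappa/\alpha$, i.e. $\tilde{M}_p(t;\varkappa,\alpha)=M_p(\alpha t;\varkappa/\alpha)$.

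Finally I would take logarithms, divide by $t$ and let $t\to\infty$:
$$\tilde{\gamma}_p(\varkappa,\alpha)=\lim_{t\to\infty}\frac{\ln \tilde{M}_p(t;\varkappa,\alpha)}{t}=\lim_{t\to\infty}\frac{\ln M_p(\alpha t;\varkappa/\alpha)}{t}=\alpha\lim_{r\to\infty}\frac{\ln M_p(r;\varkappa/\alpha)}{r}=\alpha\,\gamma_p\!\left(\frac{\varkappa}{\alpha}\right),$$
where the existence of all these limits is guaranteed by the subadditivity of $\ln M_p$ already established above. I do not expect a genuine obstacle: the one point deserving a little care is the justification of the time change for the Markov processes, namely that $(\vec{x}_{\sigma/\alpha})_\sigma$ really is a family of independent random walks with the rescaled generator, which follows either from uniqueness of the Markov semigroup generated by the bounded operator $\varkappa\mathcal{L}$ or from a direct check of transition probabilities.
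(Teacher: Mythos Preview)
Your argument is correct and follows essentially the same route as the paper: a linear time change $\tau=\alpha s$ in the Kac--Feynman representation (\ref{eq:alternative tilde m_p Kac Feynman}) turns the $\alpha$-weighted potential into the unweighted one at time $\alpha t$, and the rescaled process $\vec{x}_{\tau/\alpha}$ has generator $(\varkappa/\alpha)\mathcal{L}$, yielding $\tilde{\gamma}_p(\varkappa,\alpha)=\alpha\gamma_p(\varkappa/\alpha)$. Your write-up is in fact more careful than the paper's about justifying the generator of the time-changed process and the existence of the limits.
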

\begin{proof}
    This result is derived through direct calculations. We rescale time by setting $\alpha s = \tau$ in (\ref{eq:alternative tilde m_p Kac Feynman}) and obtain:
\begin{align}
\tilde{m}_p(t,\vec{x}) = E_{\vec{x}} \exp\left(\int_{0}^{\alpha t} V_p(t,\vec{x}_{\tau/\alpha}) d\tau \right)
\end{align}
The new process $\vec{y}(t)=\vec{x}_{\frac{\tau}{\alpha}}$ has the generator $\frac{\varkappa}{\alpha}\mathcal{L}$, and 
\begin{align}
    \tilde{\gamma}(\varkappa,\alpha)=\lim\limits_{t\rightarrow\infty}\frac{\ln \tilde{m}_p(t,\boldsymbol{0})}{t}=\lim\limits_{t \rightarrow \infty}\frac{\alpha \ln m_p(\tau,\boldsymbol{0})}{\tau}=\alpha\gamma_p(\frac{\varkappa}{\alpha})
\end{align}
\end{proof}
The following lemma gives a special representation of the multiparticle potential 
\begin{align}\label{Vp}
V_p(x_1, x_2, \ldots, x_p) = \sum_{1 \leq i < j \leq p} B(x_i - x_j).
\end{align}
\begin{lemma}
If $p = 2l+1$ is odd and the number of terms in $V_p$ is ${p \choose 2} = l(2l+1)$, then there exists a partition of the sum (\ref{Vp}) in $p$ groups, $V_p = G_1 + \cdots + G_p$, such that each group contains $l$ terms $B(x_i - x_j)$ with the following property: the group $G_k,~1\leq k\leq p,$ does not depend on $x_k$, and each other argument $x_s$ in the terms $B(x_i - x_j)$ inside of the group $G_k$ occurs only once. 

If $p = 2l$ is even and the number of terms in $V_p$ is ${p \choose 2} = l(2l-1)$, then there exists a partition of the sum (\ref{Vp}) in $p-1$ groups, $V_p = G_1 + \cdots + G_{p-1}$, such that each group contains $l$ terms $B(x_i - x_j)$ with the following property: each argument $x_s$ in the terms $B(x_i - x_j)$ inside of a group occurs only once.

\end{lemma}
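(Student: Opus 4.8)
The plan is to reformulate the claim as a statement about proper edge colorings of the complete graph $K_p$, because the sum $V_p = \sum_{i<j} B(x_i-x_j)$ is naturally indexed by the edges of $K_p$ on vertex set $\{1,\dots,p\}$, and the condition ``each argument $x_s$ occurs only once inside a group'' means precisely that the edges assigned to a single group $G_k$ form a \emph{matching} in $K_p$. So what we must produce is a partition of $E(K_p)$ into matchings — i.e.\ a proper edge coloring — with $p$ colors when $p=2l+1$ is odd, and with $p-1$ colors when $p=2l$ is even. This is exactly the content of the classical theorem on the chromatic index of complete graphs: $\chi'(K_n) = n-1$ if $n$ is even and $\chi'(K_n) = n$ if $n$ is odd. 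The extra bookkeeping in the odd case — that color class $G_k$ should avoid the index $k$ — will come for free from the standard construction.

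First I would set up the standard ``circle'' / round-robin construction. For $p=2l+1$ odd: place the vertices $0,1,\dots,2l$ at the vertices of a regular $(2l+1)$-gon and, for each $k \in \{0,1,\dots,2l\} \cong \mathbb{Z}/p\mathbb{Z}$, let $G_k$ be the set of $l$ edges $\{i,j\}$ with $i+j \equiv 2k \pmod p$ (equivalently, the edges perpendicular to the axis through vertex $k$). Each such $G_k$ is a perfect matching on the $2l$ vertices $\neq k$, since $i+j\equiv 2k$ has the unique ``diagonal'' solution $i=j=k$ removed and otherwise pairs up the remaining $2l$ vertices bijectively; and every edge $\{i,j\}$ lies in exactly one $G_k$, namely $k \equiv (i+j)\,2^{-1} \pmod p$ (here $2$ is invertible mod the odd number $p$). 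This gives $V_p = G_0 + \cdots + G_{2l}$, each $G_k$ has $l$ terms, $G_k$ does not involve $x_k$, and inside $G_k$ every other $x_s$ appears exactly once — which is the odd half of the Lemma. For $p = 2l$ even: apply the odd construction to $K_{p+1}$ on vertices $\{0,1,\dots,2l\}$ but then delete the added vertex $2l$; in the $(2l+1)$-coloring of $K_{2l+1}$ each color class used vertex $2l$ exactly once except the class $G_{\,?}$ that omitted $2l$, so after deleting vertex $2l$ exactly one color class becomes a \emph{near-perfect} matching on $\{0,\dots,2l-1\}$ missing one vertex, and the construction can be arranged (standard) so that in fact all $2l$ color classes restricted to $\{0,\dots,2l-1\}$ become matchings, one of them empty — discarding the empty one leaves $p-1 = 2l-1$ nonempty matchings, each of size $l$, covering every edge of $K_{2l}$ exactly once; equivalently just take $G_k$, for $k=1,\dots,2l-1$, to be the edges $\{i,j\}$ of $K_{2l}$ with $i+j\equiv 2k \pmod{2l-1}$ where we now also match the leftover vertex $2l-1$ to the vertex $i$ with $2i\equiv 2k$, i.e.\ $i\equiv k \pmod{2l-1}$. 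I would verify directly that this is a partition into $2l-1$ perfect matchings of $K_{2l}$.

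The key steps in order: (1) translate ``group $G_k$, occurrences of each $x_s$ at most once'' into ``matching in $K_p$''; (2) give the explicit modular construction $G_k = \{\{i,j\}: i+j\equiv 2k\}$ and check each is a perfect (resp.\ near-perfect, then completed) matching; (3) check every edge is covered exactly once, using invertibility of $2$ mod $p$ in the odd case; (4) in the odd case observe $k\notin$ the support of $G_k$ because $i=j=k$ is the excluded solution; (5) handle the even case by the ``delete a vertex from $K_{p+1}$'' reduction, or by the parallel explicit formula, and confirm exactly $p-1$ nonempty classes of size $l$. I expect the only genuinely delicate point to be the even case: one must be careful that after the vertex-deletion reduction the resulting classes are genuine perfect matchings of $K_{2l}$ (each of size exactly $l$, not $l-1$), which forces the rotational near-1-factorization to be chosen so that the vertex removed is ``covered'' by folding the missing edge into another class; stating the explicit formula for the even case and verifying it partitions $E(K_{2l})$ into exactly $2l-1$ perfect matchings is the one place where a short but nonautomatic check is needed, and I would write that out carefully rather than appeal to $\chi'(K_n)$ as a black box.
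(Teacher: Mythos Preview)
Your approach is correct and coincides with the paper's construction: the paper places the $p$ vertices on a circle and, for odd $p=2l+1$, defines $G_k$ as the $l$ chords $\{x_{k+r}, x_{k-r}\}$, $r=1,\dots,l$ --- exactly your set $\{\{i,j\}: i+j\equiv 2k \pmod p\}$, phrased geometrically rather than algebraically. For even $p=2l$, the paper isolates $x_p$, applies the odd construction to the remaining $2l-1$ vertices (giving $2l-1$ groups, each of size $l-1$, with $G_i$ missing $x_i$), and then appends $B(x_p-x_i)$ to $G_i$ --- precisely your explicit formula at the end.

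One point to correct: your first description of the even case, ``delete a vertex from $K_{p+1}$,'' is the wrong direction and does not yield $p-1$ perfect matchings. Deleting vertex $2l$ from the $(2l+1)$-coloring of $K_{2l+1}$ leaves one perfect matching (the class $G_{2l}$ that already omitted vertex $2l$) together with $2l$ matchings of size $l-1$ --- a total of $2l+1$ nonempty classes, not $2l-1$, and none of them becomes empty. The standard reduction, which your explicit formula actually implements and which the paper uses, goes the other way: start from the odd construction on $K_{p-1}=K_{2l-1}$ and \emph{add} a vertex, completing each near-1-factor to a 1-factor by joining the new vertex to the unique vertex missing from that color class. So drop the $K_{p+1}$ paragraph and keep only the explicit formula, and your argument is both correct and the same as the paper's.
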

\begin{figure}[h]
\includegraphics[scale=0.25]{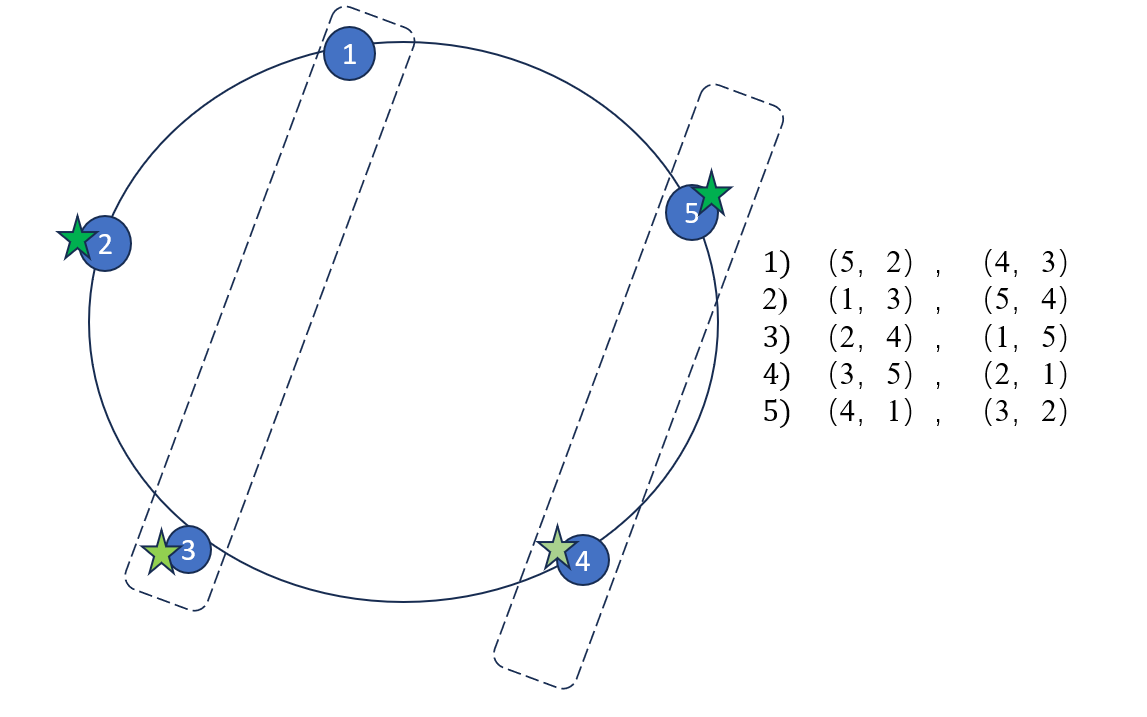}
\caption{An example of $G_2$ when $p=5,l=2$.}
\label{fig:p5l3_example}
\end{figure}

The Figure (\ref{fig:p5l3_example}) depicts the algorithmic process for constructing a configuration, with specific values assigned for $l$ and $p$ (namely, $l=2$ and $p=5$). It is crucial to note that this construction is not unique. Along the unit circle, we identify $p=2l+1$ points, numbered clockwise as $x_1, x_2, \ldots, x_{2l+1}$. To elaborate, we fix $1 \leq i \leq p$, exclude $x_i$, and subsequently generate successive pairs using the remaining variables. This is achieved by moving clockwise and counterclockwise along the equal distribution of points: $(x_{i+1},x_{i-1}),(x_{i+1},x_{i-1}),\cdots$. Then 
\begin{align*}
V(x_1,x_2,x_3,x_4,x_5)=&\underbrace{B(x_5-x_2)+B(x_4-x_3)}_{G_1}+\underbrace{B(x_1-x_3)+B(x_5-x_4)}_{G_2}\\
&+\underbrace{B(x_2-x_4)+B(x_1-x_5)}_{G_3}+\underbrace{B(x_3-x_5)+B(x_2-x_1)}_{G_4}\\
&+\underbrace{B(x_4-x_1)+B(x_3-x_2)}_{G_5}.
\end{align*}

\begin{proof}
When $p=2l+1$, the first part directly follows from our specific partition, as depicted in Fig. (\ref{fig:p5l3_example}). For $p=2l$, where $l>1$, we proceed by isolating $x_p$ and applying the partition to the remaining $2l-1$ points. We divide them into $p-1=2l-1$ groups $G_i$, where $i=1,2,\cdots,p-1$, with each $G_i$ containing $l-1$ possible pairs among $\{x_1,x_2,\cdots,x_{p-1}\}$-$\{x_i\}$. Next, we add $B(x_p-x_i)$ to each group $G_i$. For example, when $p=6$ and $l=3$,
\begin{align*}
V&=\underbrace{B(x_5-x_2)+B(x_4-x_3)+B(x_6-x_1)}_{G_1}+\underbrace{B(x_1-x_3)+B(x_5-x_4)+B(x_6-x_2)}_{G_2}\\
&+\underbrace{B(x_4-x_2)+B(x_5-x_1)+B(x_6-x_3)}_{G_3}+\underbrace{B(x_3-x_5)+B(x_2-x_1)+B(x_6-x_4)}_{G_4}\\
&+\underbrace{B(x_4-x_1)+B(x_3-x_2)+B(x_6-x_5)}_{G_5}.
\end{align*}

This method of construction ensures a systematic and comprehensive generation of pairs, as exemplified for the given parameters.
\end{proof}

The Theorem (\ref{thm:upper estimate of Lyapunov exponent}) highlights the efficiency of direct probabilistic methods within the pure spectral context. Providing a functional analytical proof for the same result is non-trivial. Estimations of the higher moments $M_p(t)=<u^p(t,0,0,\cdots,0)>=<E\exp(\sum\limits_{1\leq i<j\leq p}\int_0^t B(x_i(s)-x_j(s)ds>$ includes the additional improvement based on the fact that $(x_1(s)-x_2(s)),(x_3(s)-x_4(s)),\cdots$ etc are independent processes.  
\begin{theorem}\label{thm:upper estimate of Lyapunov exponent}
\textbf{Upper estimate of Lyapunov exponents} For $p\geq 3$
\begin{align}
    \frac{\gamma_p}{p}=\frac{1}{p}\lim\limits_{t\rightarrow \infty}\frac{\ln M_p(t)}{t}\leq [\frac{p}2]\gamma_2(\frac{\varkappa}{p})
\end{align}
\end{theorem}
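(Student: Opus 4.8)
The plan is to combine the Feynman–Kac representation \eqref{Kac Feynman for p particle} with the group decomposition of the potential from the previous lemma and the scaling identity of Lemma \ref{lem:two gamma relationship}. Recall $M_p(t) = E_{\boldsymbol 0}\exp\left(\int_0^t V_p(\vec x_s)\,ds\right)$, where $\vec x_s = (x_1(s),\dots,x_p(s))$ consists of $p$ independent random walks each with generator $\varkappa\mathcal L$. First I would write $V_p = G_1 + \cdots + G_N$ using the lemma, where $N = p$ if $p$ is odd and $N = p-1$ if $p$ is even; in either case the number of groups is at least $p$ — actually when $p=2l+1$ we have $N=p$ and when $p=2l$ we have $N=p-1$, and $[p/2]$ equals $l$ in both cases. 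The key structural fact is that within each group $G_k$, every argument $x_s$ that appears does so in \emph{exactly one} term $B(x_i-x_j)$, so $G_k$ is a sum of $l$ terms $B(x_{i_1}-x_{j_1}) + \cdots + B(x_{i_l}-x_{j_l})$ built from $2l$ \emph{distinct} indices, hence the difference processes $x_{i_1}(s)-x_{j_1}(s), \dots, x_{i_l}(s)-x_{j_l}(s)$ are mutually independent.

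Next I would apply Hölder's inequality to split the expectation across the groups. Writing $\exp(\int_0^t V_p) = \prod_{k} \exp(\int_0^t G_k)$ and using Hölder with exponents all equal to $N$ (the number of groups),
\begin{align*}
M_p(t) = E_{\boldsymbol 0}\prod_{k=1}^{N}\exp\!\left(\int_0^t G_k(\vec x_s)\,ds\right) \leq \prod_{k=1}^{N}\left(E_{\boldsymbol 0}\exp\!\left(N\int_0^t G_k(\vec x_s)\,ds\right)\right)^{1/N}.
\end{align*}
Now for a fixed $k$, $E_{\boldsymbol 0}\exp(N\int_0^t G_k)$ factors over the $l$ independent difference processes in $G_k$, each contributing a term of the form $E\exp(N\int_0^t B(x_i(s)-x_j(s))\,ds)$. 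Since $x_i(s) - x_j(s)$ is itself a random walk (the difference of two independent copies of the walk with generator $\varkappa\mathcal L$), each such factor is exactly the second moment $\tilde m_2$ of the $\alpha$-scaled problem with $\alpha = N$ — more precisely, $E\exp(N\int_0^t B(x_i(s)-x_j(s))\,ds) = \tilde m_2(t,\boldsymbol 0)$ with white-noise factor $\alpha = N$, up to the fact that the difference walk has generator $2\varkappa\mathcal L$ rather than $\varkappa\mathcal L$; this is the same bookkeeping that produces the factor $2\varkappa$ in \eqref{2ndmoment_L}, so it should be absorbed cleanly. Collecting exponents: each group contributes a factor $\exp(\tilde\gamma_2 t + o(t))$ raised to the power $l/N$, and there are $N$ groups, giving $M_p(t) \leq \exp(l\,\tilde\gamma_2 t + o(t))$, hence $\gamma_p \leq l\,\tilde\gamma_2(\varkappa, N)$.

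Finally I would invoke Lemma \ref{lem:two gamma relationship} to rewrite $\tilde\gamma_2(\varkappa, N) = N\,\gamma_2(\varkappa/N)$, and since $N \geq p$ while $\gamma_2(\varkappa/\cdot)$ is monotone (smaller diffusivity gives larger Lyapunov exponent, as $\gamma_2$ is nonincreasing in $\varkappa$ — this follows from the Feynman–Kac representation since a slower walk stays clustered longer), we get $\tilde\gamma_2(\varkappa,N) \leq N\gamma_2(\varkappa/N) \leq p\,\gamma_2(\varkappa/p)$ when $N = p$, and when $N = p-1 < p$ one checks $N\gamma_2(\varkappa/N) \leq p\gamma_2(\varkappa/p)$ requires a little care but follows from convexity/monotonicity properties of $\varkappa \mapsto \varkappa\gamma_2(1/\varkappa)$-type scaling. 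Combined with $l = [p/2]$, this yields $\gamma_p/p \leq [p/2]\,\gamma_2(\varkappa/p)$ after dividing by $p$ — wait, I need $\gamma_p \leq p\,[p/2]\,\gamma_2(\varkappa/p)$, so the Hölder exponent should be chosen as $p$ rather than $N$ throughout; I would redo the split with all Hölder exponents equal to $p$, which is legitimate since $p \geq N$, at the cost of each group-factor being bounded by $\tilde m_2$ with $\alpha = p$, and then $\gamma_p \leq l \cdot \tilde\gamma_2(\varkappa,p) = l\cdot p\,\gamma_2(\varkappa/p) = p[p/2]\gamma_2(\varkappa/p)$, giving the claimed bound.

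\textbf{Main obstacle.} The delicate point is the precise accounting of constants: reconciling the generator $2\varkappa\mathcal L$ of the difference walk with the single-particle generator $\varkappa\mathcal L$ appearing in the definition of $\gamma_2$, and choosing the Hölder exponent (I believe it must be exactly $p$, not the number of groups $N$) so that the $\alpha$-parameter feeding into Lemma \ref{lem:two gamma relationship} comes out to exactly $p$. One must also confirm that the leftover group in the even case ($p = 2l$, only $p-1$ groups) does not spoil the bound; since $[p/2] = l$ and each group still has exactly $l$ pairs, the count of pairs per group — not the number of groups — is what controls the exponent $l$, so this is fine, but it requires stating carefully.
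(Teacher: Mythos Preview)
Your proposal is correct and follows essentially the same route as the paper: Feynman--Kac, the group decomposition of $V_p$ from the lemma, an AM--GM/H\"older split across the groups with exponent $p$, factorization over the $l=[p/2]$ independent difference walks inside each group, and then Lemma~\ref{lem:two gamma relationship}. The paper uses the pointwise inequality $a_1\cdots a_p\le\frac1p\sum a_i^p$ inside the expectation (rather than H\"older in product form) and is somewhat less careful than you about the even case; your worry about the $2\varkappa$ in the difference-walk generator is not an obstacle, since $\gamma_2(\varkappa)$ is by definition the exponent for the two-particle equation~\eqref{2ndmoment_L}, which already carries $2\varkappa\mathcal L$.
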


\begin{proof}

Consider $M_p(t)=E_{(0,0,\cdots,0)}[\exp(\sum\limits_{i<j}\int_0^tB(x_i(s)-x_j(s)ds)]$, using the inequality $$a_1a_2\cdots a_p\leq \frac{1}{p}(a_1^p+\cdots a_p^p),$$ we can estimate moment function $M_p(t)$ from the above.

\begin{align*}
M_p(t)&=E_{(0,\cdots,0)}[\exp(\sum\limits_{1\leq i<j\leq p}(\int_0^t B(x_i(s)-x_j(s))ds))]\\
&= E_{(0,\cdots,0)}[\exp(\sum\limits_{l=1}^p\sum\limits_{(i,j)\in G_l}(\int_0^tB(x_i(s)-x_j(s))ds))]\\
&\leq \frac{1}{p}\sum\limits_{l=1}^pE_{(0,\cdots,0)}[\exp(p\sum\limits_{(i,j)\in G_l}(\int_0^tB(x_i(s)-x_j(s))ds))]\\
&= E_{(0,\cdots,0)}[\exp(p\sum\limits_{(i,j)\in G_l}(\int_0^tB(x_i(s)-x_j(s))ds))]\\
& =\{E_{(0,\cdots,0)}[\exp(p(\int_0^tB(x_1(s)-x_2(s))ds))]\}^l,\,\, l=[\frac{p}{2}]
\end{align*}
In the last step, we use the symmetry of the potential and independence of the integrals in each group $G_l$. This independence by itself follows from the independence of the random walks $x_i(s)$, $i=1,2,\cdots,p$. 

But $E_{(0,0)}[\exp(p\int_0^tB(x_1(s)-x_2(s))ds]=y(t,x_1,x_2)$ satisfies two-particle Schr\"{o}dinger equation 
\begin{align}
    \frac{\partial y}{\partial t}&=\varkappa(\mathcal{L}_{x_1}+\mathcal{L}_{x_2})y+pB(x_1-x_2)y\\
    y(0,x_1,x_2)&=1.
\end{align}
Since we have a homogeneous space, $y(t,x_1,x_2)$ is a function depending on $x_1-x_2$, that is $y(t,x_1,x_2)=\tilde{y}(t,x_1-x_2)$. After a change of variable $v=x_1-x_2$ and considering initial conditions $y(t,x_1,x_2)=\tilde{y}(0,x_1-x_2)=1$, we eventually reduce the analysis to the problem:
\begin{align*}
\frac{\partial \tilde{y}}{\partial t}=\tilde{H}_2\tilde{y}=2\varkappa\mathcal{L}_{v}{\tilde{y}}(v)+pB(v)\tilde{y},\,\, \tilde{M}(0,\cdot)=1, \,\,v \in Z^d
\end{align*}
where the potential $B(\cdot)\in l^1(Z^d)$ is positive definite, as discussed in the introduction.  Let $\tilde{\gamma}_2(\varkappa,p)$ be the Lyapunov exponent for $\tilde{y}$. Due to Lemma (\ref{lem:two gamma relationship}),
\begin{align}
    \tilde{\gamma}_2(\varkappa,p)=p\tilde{\gamma}_2(\frac{\varkappa}{p},1)=p\gamma_2(\frac{\varkappa}{p}).
\end{align}
Combining together all results, we get for arbitrary $p\geq 3$ and $l=[\frac{p}{2}]$
\begin{align*}
    \lim\limits_{t\rightarrow \infty}\frac{\ln M_p(t)}{t}=\gamma_p\leq lp\gamma_2(\frac{\varkappa}{p})
\end{align*}
That is,
\begin{align}
    \frac{\gamma_p}{p}\leq l\gamma_2(\frac{\varkappa}{p}=[\frac{p}{2}]\gamma_2(\frac{\varkappa}{p}).
\end{align}
\end{proof}
The positivity $\gamma_2(\varkappa)>0$ will be discussed in section 3.

\subsection{Lower estimate for Lyapunov exponent $\gamma_p$}

\begin{theorem}
    For $p\geq 2$, 
    \begin{align}
        \frac{\gamma_p}{p-1}>\gamma_2(\frac{\varkappa}{p-1})+\frac{B(0)(p-2)}{2}-2\varkappa d
    \end{align}
\end{theorem}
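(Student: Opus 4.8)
The plan is to bound $M_p(t)$ from below by inserting an indicator that forces a convenient configuration of the $p$ random walks, in the spirit of the lower bound $M_p(t)\geq E[I_{\{\text{all walks stay at }0\}}]e^{\frac{p(p-1)}{2}B(0)t}$ already used in the proof of Theorem~2.1, but keeping one pair free so that it contributes a $\gamma_2$-type growth. Concretely, I would write $M_p(t)=E_{(0,\dots,0)}\exp\bigl(\sum_{i<j}\int_0^t B(x_i(s)-x_j(s))\,ds\bigr)$ and restrict to the event that the walks $x_3(s),\dots,x_p(s)$ all remain at the origin on $[0,t]$ while $x_1(s),x_2(s)$ move freely. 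On this event the pairs $(i,j)$ with $i,j\geq 3$ contribute $\binom{p-2}{2}B(0)t$, the pairs $(1,j)$ and $(2,j)$ with $j\geq 3$ each contribute $B(x_\ell(s)-0)$ for $\ell\in\{1,2\}$, which I bound below by $-$ (a multiple of) $B(0)t$ or simply estimate crudely, and the pair $(1,2)$ contributes $\int_0^t B(x_1(s)-x_2(s))\,ds$. The probability that $p-2$ independent walks with generator $\varkappa\mathcal{L}$ stay at $0$ up to time $t$ is $\exp\{-(p-2)\varkappa d\,t\}$ (each walk leaves $0$ at rate $\varkappa$, and $a(z)>0$ for $|z|=1$ gives the $d$-dependence exactly as in Theorem~2.1's lower bound, with the convention that the relevant holding-rate constant is $\varkappa d$ after normalization; I will match the constant to the one already fixed in the paper).

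After peeling off these deterministic and probabilistic factors, what remains is $E_{(0,0)}\exp\bigl(\int_0^t B(x_1(s)-x_2(s))\,ds\bigr)$ times an exponential factor accounting for the $(1,\ell)$ and $(2,\ell)$ cross terms. The two-particle expectation $E_{(0,0)}\exp(\int_0^t B(x_1(s)-x_2(s))\,ds)$ is exactly $M_2$ for the operator $\varkappa(\mathcal{L}_{x_1}+\mathcal{L}_{x_2})$, i.e. after reduction to the difference variable it is the second-moment problem with diffusivity $\varkappa$, so its exponential rate is $\gamma_2(\varkappa)$ — but I need $\gamma_2(\varkappa/(p-1))$, so the slicing has to be arranged to give the diffusivity $\varkappa/(p-1)$. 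The way to get this, following the mechanism of Lemma~2.5 and Theorem~2.6, is to group the $p$ walks so that $p-1$ of the cross-interactions are carried by genuinely moving pairs and only exploit one group, then rescale; alternatively, since $a_1\cdots a_{p-1}\geq$ nothing useful in the lower direction, the cleaner route is: restrict one group $G_k$ (from Lemma~2.4, when $p$ is suitably chosen; for general $p$ just take the $p-1$ pairs $(1,2),(2,3),\dots$ or a matching) to be active and all other walks frozen, so that the active contribution is a product/chain of $\gamma_2$-terms while the total holding cost is $(p-2)\varkappa d\,t$ — wait, this is where the constant $p-1$ versus the diffusivity argument of Lemma~2.3 must be invoked. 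Here is the precise device: consider the rescaled SDE with factor $\alpha=p-1$ in front of the noise as in Lemma~2.3; then $\tilde\gamma_p(\varkappa,p-1)=(p-1)\gamma_p(\varkappa/(p-1))$, and for $\tilde m_p$ the potential is $(p-1)V_p$; bounding $\tilde M_p(t)$ below by freezing $p-2$ walks gives $\tilde\gamma_p(\varkappa,p-1)\geq (p-1)\gamma_2(\text{something})+(p-1)\tfrac{(p-2)}{2}B(0)-(p-2)\varkappa d$, and dividing by $p-1$ and matching the two-particle rate yields the claimed bound $\tfrac{\gamma_p}{p-1}>\gamma_2(\tfrac{\varkappa}{p-1})+\tfrac{B(0)(p-2)}{2}-2\varkappa d$, with the factor $2$ in $2\varkappa d$ coming from the fact that in the reduced difference variable the diffusivity doubles (as in \eqref{2ndmoment_L}), so the two frozen-walk cost and the normalization of $\gamma_2(\cdot)$ conspire to produce $2\varkappa d$ rather than $\varkappa d$.

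The main steps in order: (1) write the Kac–Feynman representation of $M_p(t)$ and choose the restriction event (freeze $p-2$ of the walks at the origin, keep the pair giving the $\gamma_2$ contribution, and choose which $p-2$ cross-pairs are forced to equal $B(0)$); (2) compute the deterministic bonus $\binom{p-2}{2}B(0)t$ plus the crude lower estimate of the remaining cross terms $B(x_1(s)-0)$ and $B(x_2(s)-0)$, absorbing any negative part of $B$ into an additive constant — here I should be careful: since $B$ need not be nonnegative pointwise, I bound $B(x_\ell(s)) \geq -\,|B(x_\ell(s))|\geq$ but actually on the frozen event the cross terms involving a still-walking $x_1,x_2$ are not controlled; the honest fix is to also freeze $x_1,x_2$ except allow the single pair $(x_1,x_2)$ to be replaced by its difference process, which is itself Markov with generator $2\varkappa\mathcal{L}$, and NOT require it to stay at $0$ — then there are no uncontrolled cross terms because every walk except the fictitious difference walk is at $0$; (3) identify the surviving factor as $E_0\exp(\int_0^t B(v(s))\,ds)$ for the difference walk $v$ with generator $2\varkappa\mathcal{L}$, whose rate is $\gamma_2(\varkappa)$, and invoke Lemma~2.3's rescaling to convert $\gamma_2(\varkappa)$ into $\gamma_2(\varkappa/(p-1))$ by absorbing the factor $p-1$; (4) collect the exponents and divide by $p-1$. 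The main obstacle is precisely Step (2)–(3): making the bookkeeping of which interactions are forced, which are free, and how the $(p-1)$-rescaling interacts with the factor-$2$ doubling in the difference variable, all consistent, so that the constants land exactly on $\tfrac{B(0)(p-2)}{2}$ and $2\varkappa d$; the inequality being strict ($>$ rather than $\geq$) should come from the fact that the frozen-walk event has strictly positive probability bounded below by the exponential, with a genuine loss, or from strict monotonicity of $\gamma_2$, and I would pin that down last.
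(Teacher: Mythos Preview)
Your overall framework is right: start from the Kac--Feynman representation of $M_p(t)$, insert an indicator that freezes some of the walks at the origin, peel off the deterministic bonuses and the holding-time cost, and then invoke Lemma~2.3 to produce $\gamma_2(\varkappa/(p-1))$. But the specific configuration you choose---freezing $p-2$ walks and letting the \emph{pair} $x_1,x_2$ move---is the wrong one, and the obstacle you already noticed is fatal. On your event the $2(p-2)$ cross terms $\int_0^t B(x_\ell(s))\,ds$, $\ell=1,2$, are not bounded below by anything useful; using $B(x)\geq -B(0)$ destroys the constant $\tfrac{(p-2)}{2}B(0)$. Your ``honest fix'' of freezing $x_1,x_2$ while letting their difference process move is not a well-defined event in the path space of $(x_1,\dots,x_p)$: if $x_1\equiv x_2\equiv 0$ then $x_1-x_2\equiv 0$, and you cannot decouple the difference from the individual walks. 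Separately, even setting the cross terms aside, your surviving two-particle factor has rate $\gamma_2(\varkappa)$, not $\gamma_2(\varkappa/(p-1))$; Lemma~2.3 only trades a \emph{scaled potential} $\alpha B$ for a rescaled diffusivity $\varkappa/\alpha$, and in your configuration the pair $(1,2)$ carries potential $B$, not $(p-1)B$.

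The paper's device is to freeze \emph{$p-1$} walks, say $x_2,\dots,x_p$, and leave only $x_1$ free. Then every pair $(i,j)$ with $2\leq i<j\leq p$ contributes $B(0)$, giving $\binom{p-1}{2}B(0)t=\tfrac{(p-1)(p-2)}{2}B(0)t$; and the $p-1$ pairs $(1,j)$, $j\geq 2$, each contribute the \emph{same} integral $\int_0^t B(x_1(s))\,ds$, summing to $(p-1)\int_0^t B(x_1(s))\,ds$. There are no uncontrolled terms at all. The factor $p-1$ multiplying the potential is exactly what Lemma~2.3 converts into $\gamma_2(\varkappa/(p-1))$, since $\tilde\gamma_2(\varkappa,p-1)=(p-1)\gamma_2(\varkappa/(p-1))$. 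The freezing cost is $(p-1)$ independent walks held at $0$, contributing $-(p-1)\cdot 2d\varkappa\,t$; dividing everything by $(p-1)t$ gives the stated bound $\gamma_2(\varkappa/(p-1))+\tfrac{(p-2)}{2}B(0)-2d\varkappa$. The single missing idea, then, is simply: keep \emph{one} walk moving, not two.
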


\begin{proof}
Due to Kac-Feynman formula,
\begin{align*}
M_p(t)&=E_{(0,\cdots,0)}[\exp(\sum\limits_{1\leq i<j\leq p}(\int_0^t B(x_i(s)-x_j(s))ds))]\\
&\geq  E_{(0,\cdots,0)}[\exp(\sum\limits_{1\leq i<j\leq p}(\int_0^t B(x_i(s)-x_j(s))ds)) I_{x_l(s)=0, l=2,3,\cdots,p}],\,\,\\
&=  E_{(0,\cdots,0)}\left[\exp((p-1)(\int_0^t B(x_1(s)ds))\right]e^{\frac{(p-1)(p-2)}{2}B(0)}\prod\limits_{l=2}^pP(x_l(s)=0|x_l(0)=0).
\end{align*}
Hence, 
\begin{align*}
\ln M_p(t)&\geq \ln  E_{(0,\cdots,0)}[\exp((p-1)(\int_0^t B(x_1(s)ds))]\\+&\frac{(p-1)(p-2)}{2}B(0)t+\sum\limits_{l=2}^p \ln P(x_l(s)=0|x_l(0)=0).
\end{align*}
Since the number of the jumps of the random walk $x_l(s),s\in [0,t]$ is a Poisson process with the rate $2d\varkappa$, and the event $\{x_l(s)=0\}$ is equivalent to the absence of the jumps during $[0,t]$. Thus $P(x_l(s)=0|x_l(0)=0)=e^{-2d\varkappa t}$. And since for $l=2,3,\cdots, p$, $x_l(s)$ are independent random walks following the same law. Thus 
\begin{align}
\frac{\ln M_p(t)}{t(p-1)}&\geq \frac{\ln  E_{(0,\cdots,0)}[\exp((p-1)(\int_0^t B(x_1(s)ds))]}{t(p-1)}+\frac{(p-2)}{2}B(0)-2d\varkappa.
\end{align}

As $t\rightarrow \infty$,
\begin{align}
    \frac{\gamma_p}{p-1}\geq \frac{1}{p-1}\tilde{\gamma}_2(\varkappa,p-1)+\frac{(p-2)}{2}B(0)-2d\varkappa.
\end{align}
Due to Lemma \ref{lem:two gamma relationship}, 
\begin{align}
    \frac{\gamma_p}{p-1}\geq \frac{1}{p-1}\tilde{\gamma}_2(\varkappa,p-1)+\frac{(p-2)}{2}B(0)-2d\varkappa=\gamma_2(\frac{\varkappa}{p-1})+\frac{p-2}{2}B(0)-2d\varkappa
\end{align}

\end{proof}

\section{Spectral Analysis of $H_2$ and the strong intermittency}\label{sec:spectral analysis} 

The analysis of the second Lyapunov exponent $\gamma_2$ (or the top of the spectrum of the operator $H_2$) depends significantly on the properties of the random walk $x(t)$ associated with ``Laplacian" $\varkappa\mathcal{L}$. If $p(t,x,y)=P\{x(t)=y|x(0)=x\}$, $t>0, x,y\in Z^d$, then $p(t,x,y)$ is the fundamental solution of the following parabolic problem:
\begin{align}\label{p(t,x,y)equation}
\frac{dp}{dt}&=2\varkappa\mathcal{L} p, t>0, \quad
p(0,x,y)=\delta_y(x).
\end{align}

The process $x(t)$ is called recurrent if $\int_{0}^\infty p(t,x,x)dt=\infty$, otherwise, it is transient. 

 Using the Fourier transform, we obtain:
\begin{equation}
p(t,x,y)=\frac{1}{(2\pi)^d}\int_{T^d}e^{2\varkappa \hat{\mathcal{L}}(k)t}e^{i k (x-y)} dk, \,\,\, k\in T^d=[-\pi,\pi]^d.
\end{equation}
Here, $\hat{\mathcal{L}}(k)=\sum\limits_{z\neq 0}(\cos(k,z)-1)a(z)=\hat{a}(k)-1$ represents the Fourier symbol of the operator $\mathcal{L}$.
When $x=y$, 
\begin{align*}
\int_0^{\infty}p(t,x,x)dt=\frac{1}{(2\pi)^d}\int_{T^d}\frac{dk}{2\varkappa(1-\hat{a}(k))}.
\end{align*}

 If $E[x^2(t)]<\infty$, i.e., $\sum\limits_{z\in Z^d}|z|^2a(z)<\infty$, then $\hat{a}(k)=1-\frac{1}{2}(Dk,k)+O(|k|^2)$, where $D$ is the matrix of the second moments of $x(t)$. In this case $p(t,x,x)\sim \frac{c}{t^{d/2}},~t\to\infty,$, and the process $x(t)$ is transient for dimensions $d\geq3$ and recurrent for $d=1,2$.   
 
We consider a more general situation when the second moment of  $x(t)$ is infinite. We impose a certain regularity condition on the symbol $-\hat{\mathcal{L}}(k)=1-\hat{a}(k)$ of the operator $\mathcal{L}$. We assume that 
\begin{equation}\label{asm}
\hat{a}(k)=e^{-\beta(\dot{k})|k|^{\alpha}(1+o(|k|))}, \quad |k|\rightarrow 0,
\end{equation} 
 where $\dot{k}=\frac{k}{|k|}$, $0<\alpha\leq 2$, and $\beta(\dot{k})=\beta(-\dot{k})>0$ is a symmetric and positive function on $T^d$. Both functions $\beta$ on the sphere and the remainder term in the exponent are assumed to be sufficiently smooth. Corresponding conditions on the jump distribution $a(x)$ can be found in \cite{agbor2015global}.

Under these conditions, $p(t,x,x)\sim t^{-d/\alpha}$ as $t\to\infty$. Since $p(t,x,y)\leq 1$, we have $p(t,x,x)\leq Ct^{-d/\alpha}, ~t\geq 0,$ and the process $x(t)$ is recurrent for $d=2,\alpha=2$, and for $d=1,1\leq \alpha\leq 2$. It is transient in the case of $d=2,\alpha<2$ and $d=1,\alpha<1$.

\subsection{Transient Case}
This section is devoted to the question of the existence (or non-existence) of a positive eigenvalue of Schrödinger operators 
\begin{equation}\label{ode1A}
H\psi=\mathcal{L}\psi(x)+\sigma V(x)\psi(x), \quad x\in\mathbb Z^d, \quad \sigma>0,
\end{equation}
that includes the spectral problem for the operator  $H_2$ defined in (\ref{2ndmoment_L}) as a particular case if we set $\sigma=\frac{1}{2\varkappa}$. The corresponding change of the parameter appears after the rescaling $t\to t/(2\varkappa)$ in (\ref{2ndmoment_L}). We do not assume that the potential is positive definite and use notation $V$ instead of $B$.

Assume that potential $V(x)$ satisfies the Cwickel-Lieb-Rozenblum (CLR) type condition \cite{molchanov2010general},
\begin{align}\label{condition:ZLR1}
    \sum\limits_{x\in Z^d}|V(x)|^{\frac{d}{\alpha}}<\infty,\quad d=2,~\alpha<2 ~~ {\rm or}~~d=1,~\alpha<1,
\end{align}

or 
\begin{align}\label{condition:ZLR2}
    \sum\limits_{x\in Z^d}|V(x)|^{\frac{d}{2}}<\infty, \quad d\geq 3.
\end{align}
Let us note that these conditions hold the potential $B$ from section 2 since it was assumed that $ B\in L^1(\mathbb{Z^d})$.  

Denote the number of positive eigenvalues of the operator (\ref{ode1A}) by $N_0(\sigma V)=\sharp\{\lambda_i:\lambda_i(H)>0\}$.

\begin{theorem}
Let conditions (\ref{asm}), (\ref{condition:ZLR1}) and (\ref{condition:ZLR2}) hold and let the potential $V(\cdot)$ be positive at least at one point. Then there exists a critical value $\sigma_{cr}$ such that there are no positive eigenvalues when $\sigma<\sigma_{cr}$ and the positive eigenvalues $\lambda_0(H)$ exist when $\sigma>\sigma_{cr}$. 
\end{theorem}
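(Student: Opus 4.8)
The plan is to establish a monotonicity-and-scaling argument for the number of positive eigenvalues $N_0(\sigma V)$ as a function of $\sigma$, combined with a low-$\sigma$ absence result coming from the CLR-type bound and a high-$\sigma$ existence result coming from a trial function. First I would reduce to a min–max characterization: a positive eigenvalue of $H=\mathcal L+\sigma V$ exists iff $\sup_{\psi\in\ell^2,\,\|\psi\|=1}\bigl((\mathcal L\psi,\psi)+\sigma(V\psi,\psi)\bigr)>0$, since $\mathrm{Sp}(\mathcal L)=[\varkappa\alpha,0]$ (here $\alpha<0$) has top $0$, so any part of the spectrum of $H$ above $0$ must consist of isolated eigenvalues of finite multiplicity (the essential spectrum of $H$ is unchanged by the relatively compact perturbation $\sigma V$ when $V\in\ell^1$, hence $V$ decays, which one checks via the CLR/Birman–Schwinger compactness). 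Monotonicity in $\sigma$ is then immediate: writing $\lambda_0(\sigma)$ for the top eigenvalue when it exists, the quadratic form $(\mathcal L\psi,\psi)+\sigma(V\psi,\psi)$ is, for each fixed $\psi$ with $(V\psi,\psi)>0$, increasing in $\sigma$; and some such $\psi$ with $(V\psi,\psi)>0$ exists because $V$ is positive at some point $x_0$, so taking $\psi=\delta_{x_0}$ (or a small perturbation if $V$ has mixed sign nearby) gives $(V\psi,\psi)=V(x_0)>0$. Consequently the set $\{\sigma>0:N_0(\sigma V)\ge 1\}$ is an interval of the form $(\sigma_{cr},\infty)$ or $[\sigma_{cr},\infty)$, and I would define $\sigma_{cr}$ as its infimum.

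Next I would prove $\sigma_{cr}>0$, i.e.\ absence of positive eigenvalues for small $\sigma$. This is exactly where the regularity assumption (\ref{asm}) and the CLR-type conditions (\ref{condition:ZLR1})–(\ref{condition:ZLR2}) enter. Under (\ref{asm}) one has the heat-kernel bound $p(t,x,x)\le Ct^{-d/\alpha}$ quoted in the text, which yields a Green-function / resolvent bound for $(-\mathcal L)^{-1}$ at energy $0$ in the transient regime; feeding this into the Birman–Schwinger principle, the operator $\sigma\,|V|^{1/2}(-\mathcal L)^{-1}|V|^{1/2}$ has operator norm $\le C\sigma\|V\|_{\ell^{d/\alpha}}$ (resp.\ $\ell^{d/2}$), which is $<1$ for $\sigma$ small. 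By the Birman–Schwinger principle this precludes eigenvalues of $\mathcal L+\sigma V$ at or above $0$, so $N_0(\sigma V)=0$ for $\sigma<\sigma_{cr}$ with $\sigma_{cr}\ge (C\|V\|)^{-1}>0$. I would cite \cite{molchanov2010general} for the precise form of the transient-case Birman–Schwinger estimate under (\ref{asm}).

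Then I would prove $\sigma_{cr}<\infty$, i.e.\ existence of a positive eigenvalue for large $\sigma$. Here I pick the trial function $\psi=\delta_{x_0}$ at a point where $V(x_0)>0$: then $(\mathcal L\psi,\psi)+\sigma(V\psi,\psi)=-\varkappa\cdot(\text{something bounded})+\sigma V(x_0)$ — more precisely $(\mathcal L\delta_{x_0},\delta_{x_0})=-a(0)\cdot(\text{normalization})$; in our normalization $\mathcal L$ is bounded, so $(\mathcal L\delta_{x_0},\delta_{x_0})\ge -\|\mathcal L\|\ge -1$, whence the Rayleigh quotient is $\ge \sigma V(x_0)-1>0$ once $\sigma>1/V(x_0)$. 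By the variational principle, $\sup\mathrm{Sp}(\mathcal L+\sigma V)>0$, and since the essential spectrum stays in $[\varkappa\alpha,0]$, this supremum is an eigenvalue $\lambda_0(H)>0$. Hence $\sigma_{cr}\le 1/V(x_0)<\infty$. Combining the three steps, $N_0(\sigma V)=0$ for $\sigma<\sigma_{cr}$ and $N_0(\sigma V)\ge 1$ (so $\lambda_0(H)>0$ exists) for $\sigma>\sigma_{cr}$, which is the claim.

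\textbf{Main obstacle.} The delicate point is the absence result for small $\sigma$: one must show that the transience hypothesis packaged in (\ref{asm}) genuinely gives a bounded Green operator $(-\mathcal L)^{-1}$ sandwiched by $|V|^{1/2}$ on $\ell^2$, with norm controlled by the $\ell^{d/\alpha}$ (or $\ell^{d/2}$) norm of $V$ — this requires the Hardy–Littlewood–Sobolev / CLR-type inequality adapted to the nonlocal operator $\mathcal L$ with symbol behaving like $|k|^\alpha$ near $0$, together with care that the positive part of the spectrum is reached as an eigenvalue rather than, say, a resonance at the spectral edge $0$. The rest is a standard variational/Birman–Schwinger packaging.
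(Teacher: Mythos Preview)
Your architecture matches the paper's exactly: absence for small $\sigma$ via a CLR-type bound, existence for large $\sigma$ via the trial function $\psi=\delta_{x_0}$ with $V(x_0)>0$, and monotonicity in $\sigma$ to glue the two regimes at a single $\sigma_{cr}$. The only substantive difference is in the small-$\sigma$ step: you invoke the Birman--Schwinger operator norm bound $\|\,|V|^{1/2}(-\mathcal L)^{-1}|V|^{1/2}\|\le C\|V\|_{\ell^{d/\alpha}}$ (which does follow from the Sobolev/HLS inequality adapted to the symbol $\sim|k|^\alpha$), whereas the paper instead quotes the Bargmann-type estimate
\[
N_0(\sigma V)\le C\sum_{x\in\mathbb Z^d}\sigma V(x)\int_{1/(\sigma|V(x)|)}^\infty p(t,x,x)\,dt
\]
from \cite{molchanov2012bargmann} and plugs in the heat-kernel decay $p(t,x,x)\le Ct^{-d/\alpha}$ to get $N_0(\sigma V)\le C'\sigma^{d/\alpha}\sum|V(x)|^{d/\alpha}<1$ for small $\sigma$. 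Both routes land in the same CLR family and yield the same conclusion; the paper's version avoids having to establish the lattice Sobolev inequality for nonlocal $\mathcal L$ separately. You are also more explicit than the paper about the monotonicity argument and about why the positive spectrum is discrete (Weyl/relative compactness), which the paper leaves implicit in this theorem.
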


\begin{proof}
  For arbitrary $\sigma$, the following Bargmann type estimate can be found in \cite{molchanov2012bargmann,reed1972methods}:
\begin{align*}\label{ineq:ZLR estimate 1}
N_0(\sigma V)\leq C\sum\limits_{x\in Z^d}\sigma V(x)\int_{\frac{1}{\sigma|V(x)|}}^{\infty} p(t,x,x)dt.
\end{align*}
Since $p(t,x,x)\leq Ct^{-d/\alpha},~ t\geq 0$, and $d>\alpha$, we obtain that
\begin{align*}
N_0(\sigma V)\leq\frac{C\alpha}{d-\alpha}\sigma^{\frac{d}{\alpha}}\sum\limits_{x\in Z^d}V(x)^{d/\alpha}.
\end{align*}
 Thus, $N_0(\sigma V)<1$ when $\sigma$ is small, i.e., there are no positive eigenvalues for small $\sigma$.

In order to prove existence of positive eigenvalues for large $\sigma$, consider a test function  $\psi_0(x)=\delta_{x_0}(x),$ where $x_0$ is a point such that $V(x_0)>0$. Then
\[
(H\psi_0,\psi_0)=l+\sigma V(x_0), \quad l=(\mathcal{L}\psi_0,\psi_0)<\infty.
\]
The quadratic form is positive for large $
\sigma,$ and therefore the positive eigenvalues exist.

\end{proof}

If in any dimension $d\geq 1$, the potential is positive and decays slowly: $V(x)\geq \frac{C}{1+|x|^{\frac{d}{\alpha}-\varepsilon}}$, then $H$ with arbitrary $\sigma >0$ has infinitely many positive eigenvalues. The proof is similar to the one in \cite{molchanov2022positive}.

\subsection{Recurrent Case} We continue to study the question of the existence of a positive eigenvalue for the Schrödinger operator
\begin{equation}\label{ode1a}
 H\psi=\mathcal{L}\psi(x)+\sigma V(x)\psi(x), \quad x\in\mathbb Z^d, \quad \sigma>0; \quad \sum_{x\in Z^d}|V(x)|<\infty,  \quad \sum_{x\in Z^d}V(x)\geq 0,
\end{equation}
but now we assume that operator $\mathcal{L}$ is recurrent and the conditions on $V$ are different compared to the transient case.

Since an analysis of the ground states 
is important for many applications and has been widely studied, we omit many restrictions imposed on $H$ earlier.
Operator $\mathcal{L}$ now can be not related to a Markov process and is defined simply as an operator which  in the Fourier images acts as the multiplication by a continuous function $\hat{\mathcal{L}}(k), ~k\in T^d$:
\begin{equation*}
\sum_{x\in Z^d}\mathcal{L}\psi(x)e^{-i<k,x>}=\hat{\mathcal{L}}(k)\sum_{x\in Z^d}\psi(x)e^{-i<k,x>}.  
\end{equation*}
We assume that $\hat{\mathcal{L}}(0)=0, ~~\hat{\mathcal{L}}(k)<0$ for $k\neq 0$. If the opposite is not stated, we assume that
\begin{equation}\label{it}
\int_{T^d}[\hat{\mathcal{L}}(k)]^{-1}dk=\infty.
\end{equation}
This condition in the Markovian case is equivalent to the recurrency of the process. While the last condition in (\ref{ode1a}) often appears as a consequence of the potential being positive-definite, we do not assume the latter property.

The spectrum of operator $\mathcal{L}$ coincides with the range of function $[\min_k\hat{\mathcal{L}}(k), 0]$. The presence of the potential may lead to the appearance of positive eigenvalues. Our goal is to find conditions that guarantee the existence of a positive eigenvalue for the operator (\ref{ode1a}) with arbitrary small $\sigma>0$. Note that the same conditions will lead to the existence of a positive eigenvalue for all $\sigma>0$. Indeed, if a positive eigenvalue exists for some $\sigma>0$  then it exists for operator $\sigma^{-1}\mathcal{L}+ V(x)$ 
and then it exists for the same operator with larger values of $\sigma$ due to the monotonicity 
of the latter operator in $\sigma$. Hence, operator (\ref{ode1a}) with larger values of $\sigma$ also has positive eigenvalues.

First, we will prove a couple of statements concerning the case when
\begin{equation}\label{posit}
\sum_{x\in Z^d}V(x)>0.
\end{equation}
\begin{theorem}\label{t}
Let (\ref{it}), (\ref{posit}) hold. Then the positive spectrum of the operator (\ref{ode1a}) is discrete and non-empty, i.e., the operator with arbitrary $\sigma>0$ has a positive eigenvalue.
\end{theorem}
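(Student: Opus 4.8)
The plan is to work in the Fourier representation and to reduce the existence of a positive eigenvalue of $H=\mathcal{L}+\sigma V$ to the solvability of a Birman–Schwinger–type equation. Suppose $\lambda>0$ is an eigenvalue with eigenfunction $\psi\in l^2(\mathbb Z^d)$. Since $\hat{\mathcal{L}}(k)\le 0$, the operator $\lambda-\mathcal{L}$ is invertible and bounded below by $\lambda$, so we may rewrite the eigenvalue equation as $\psi=\sigma(\lambda-\mathcal{L})^{-1}V\psi$. Writing $V=V_+-V_-$ and factoring through $|V|^{1/2}$, the existence of a positive eigenvalue at $\lambda$ is equivalent to $1$ being an eigenvalue of the (bounded, for $\lambda>0$) Birman–Schwinger operator $B_\lambda=\sigma\,|V|^{1/2}\,\mathrm{sgn}(V)\,(\lambda-\mathcal{L})^{-1}|V|^{1/2}$ acting on $l^2$. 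The discreteness of the positive spectrum then follows because $B_\lambda$ is compact (it is a product of a bounded operator with $|V|^{1/2}$, and $V\in l^1$ forces $|V|^{1/2}\in l^2$, so $B_\lambda$ is Hilbert–Schmidt, hence compact; together with the standard monotonicity of $B_\lambda$ in $\lambda$ this gives that positive eigenvalues of $H$ are isolated with finite multiplicity and can accumulate only at $0$).

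For existence, the idea is to test with an almost-constant trial vector and exploit $\sum_x V(x)>0$ together with the divergence of $\int_{T^d}[\hat{\mathcal{L}}(k)]^{-1}dk$, which is exactly the recurrence condition (\ref{it}). Concretely, I would fix $\lambda>0$ small and estimate the quadratic form $(B_\lambda\phi,\phi)$ on a vector $\phi$ concentrated where $V$ is positive, or — more cleanly — show directly that the resolvent form $F(\lambda):=\sigma\,(\,(\lambda-\mathcal{L})^{-1}g,g\,)$ with $g$ chosen so that $\widehat g$ is a smooth bump equal to a positive constant near $k=0$ tends to $+\infty$ as $\lambda\downarrow 0$, because $(\lambda-\mathcal{L})^{-1}$ acts in Fourier space as multiplication by $(\lambda-\hat{\mathcal{L}}(k))^{-1}$ and $\int_{T^d}(\lambda-\hat{\mathcal{L}}(k))^{-1}|\widehat g(k)|^2\,dk\to\int_{T^d}[-\hat{\mathcal{L}}(k)]^{-1}|\widehat g(k)|^2\,dk=\infty$ by monotone convergence and (\ref{it}), (\ref{posit}). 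The combinatorial subtlety is to convert this statement about a fixed $g$ into a lower bound on the top of the spectrum of $B_\lambda$ while controlling the indefinite sign of $V$: here one uses that $\sum_x V(x)>0$ means $\langle V\mathbf 1,\mathbf 1\rangle>0$ formally, and a localized/regularized version of $\mathbf 1$ picks out the $k=0$ singularity of the resolvent; the negative part $V_-$ contributes a bounded correction (since $V_-\in l^1$ and $\widehat{V_-}$ is continuous) that is dominated by the divergent positive-part term as $\lambda\downarrow 0$. Once $\sup\mathrm{spec}(B_\lambda)>1$ for small $\lambda$ and $B_\lambda\to 0$ in norm as $\lambda\to\infty$, continuity and monotonicity of $\lambda\mapsto\sup\mathrm{spec}(B_\lambda)$ yield a $\lambda_0>0$ with $1\in\mathrm{spec}(B_{\lambda_0})$, i.e. $\lambda_0$ is a positive eigenvalue of $H$.

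The main obstacle I anticipate is handling the indefiniteness of $V$ in the existence half: unlike the positive-definite case, one cannot simply invoke a variational principle for $H$ on a constant test function because $(\mathcal{L}\psi_0,\psi_0)$ together with $\sigma\sum V(x)|\psi_0(x)|^2$ need not be controlled by $\sum_x V(x)$ alone for $\psi_0$ close to constant on a large box — the kinetic term $(\mathcal{L}\psi_0,\psi_0)$ is negative and, in the recurrent regime, only marginally small. The resolution is precisely the resolvent/Birman–Schwinger reformulation above, which replaces "$(\mathcal{L}\psi_0,\psi_0)$ is small" by the sharp statement "$\int(\lambda-\hat{\mathcal{L}})^{-1}$ diverges", turning recurrence into the engine of the proof. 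A secondary technical point is justifying that $B_\lambda$ depends norm-continuously on $\lambda\in(0,\infty)$ and that its largest eigenvalue is simple and strictly decreasing in $\lambda$ (a Perron–Frobenius/positivity-improving argument on the kernel $(\lambda-\mathcal{L})^{-1}$, whose Fourier multiplier is strictly positive), which pins down a genuine eigenvalue rather than merely spectrum.
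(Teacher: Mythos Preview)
Your compactness/discreteness argument is essentially fine: the Birman--Schwinger kernel is indeed Hilbert--Schmidt under $V\in l^1$, and analytic Fredholm then gives discreteness of the positive spectrum. The paper reaches the same conclusion by a slightly different route, showing directly that $T_\lambda=V(\mathcal{L}-\lambda)^{-1}$ is a norm limit of finite-rank operators and then invoking analytic Fredholm. One caution: your ``monotonicity of $B_\lambda$'' and the closing Perron--Frobenius remark are not available here, because your $B_\lambda=\sigma|V|^{1/2}\mathrm{sgn}(V)(\lambda-\mathcal{L})^{-1}|V|^{1/2}$ is \emph{not} self-adjoint and not positivity-improving once $\mathrm{sgn}(V)$ is present. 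So discreteness should rest on analytic Fredholm, not on monotonicity of a top eigenvalue.

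The existence half has a genuine gap. For sign-indefinite $V$, your $B_\lambda$ has no variational ``$\sup\mathrm{spec}$'', no monotone dependence on $\lambda$, and no Perron--Frobenius simplicity; so the continuity/crossing argument you sketch does not go through. If instead you pass to the self-adjoint sandwich $A_\lambda=\sigma(\lambda-\mathcal{L})^{-1/2}V(\lambda-\mathcal{L})^{-1/2}$, the condition $\sup\mathrm{spec}(A_\lambda)>1$ unwinds exactly to the existence of $\psi\in l^2$ with $(H\psi,\psi)>\lambda\|\psi\|^2$; letting $\lambda\downarrow 0$ you are back to the bare variational problem $(H\psi,\psi)>0$, which you have not solved. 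Your ``regularized $\mathbf 1$'' heuristic runs into precisely the difficulty you flag: for a Fourier bump $\widehat\psi$ near $0$, the kinetic and potential contributions can be of the same order in the borderline recurrent cases, and the assumptions here are only $\hat{\mathcal{L}}(0)=0$, $\hat{\mathcal{L}}<0$ off $0$, and $\int_{T^d}[\hat{\mathcal{L}}]^{-1}=\infty$, with no scaling hypothesis to exploit.

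The paper's existence proof is a direct variational argument that sidesteps all of this. Using that $\widehat V$ is continuous with $\widehat V(0)=\sum_x V(x)>0$, one fixes $\varepsilon_1>0$ so that $\widehat V(k-\xi)>v_0>0$ for $|k|,|\xi|<\varepsilon_1$, and takes the test function $\widehat\psi_\varepsilon(k)=[\hat{\mathcal{L}}(k)]^{-1}$ on the annulus $\varepsilon<|k|<\varepsilon_1$ and $0$ elsewhere. Then, with $I(\varepsilon)=\int_{\varepsilon<|k|<\varepsilon_1}[\hat{\mathcal{L}}(k)]^{-1}dk\to-\infty$ by the recurrence assumption, the kinetic term equals $I(\varepsilon)$ (linear in $I$) while the potential term is bounded below by $\sigma v_0\,I(\varepsilon)^2$ (quadratic in $I$), so $(H\psi_\varepsilon,\psi_\varepsilon)\to+\infty$ as $\varepsilon\to0$. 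This choice $\widehat\psi=[\hat{\mathcal{L}}]^{-1}$ on a shrinking annulus is the missing idea: it converts the recurrence condition into a quadratic-versus-linear competition that the potential always wins, with no need to track the sign of $V$ beyond $\widehat V(0)>0$.
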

\begin{remark}
This statement in the continuous case with a slightly more restrictive assumption was proven in \cite{molchanov2022positive}.
\end{remark}
\begin{proof}
Let us show that the positive spectrum of the operator (\ref{ode1a}) is non-empty.  Conditions on $V$ imply that $\widehat{V}(0)>0$ and $\widehat{V}(k)$ is continuous. Thus, there exists $\varepsilon_1>0$ such that $\widehat{V}(k-\xi)>v_0>0$ when $|k|,|\xi|<\varepsilon_1$.
We construct test functions $\psi(x)=\psi_\varepsilon(x)$ using their Fourier images:
\[
\widehat{\psi}_\varepsilon(k)=\left\{
                            \begin{array}{c}
                              [\hat{\mathcal{L}}(k)]^{-1},~~ 0<\varepsilon\leq |k|\leq \varepsilon_1, \\
                              0,~~ |k|\notin(\varepsilon,\varepsilon_1).  \\
                            \end{array}  \right.
\]
Then
\[
<H\psi,\psi>=\int_{\varepsilon<|k|<\varepsilon_1}\hat{\mathcal{L}}(k)|\widehat{\psi}(k)|^2dk+
\sigma\int_{\varepsilon<|k|<\varepsilon_1}\int_{\varepsilon<|k|<\varepsilon_1}\widehat{V}(k-\xi)\widehat{\psi}(k)\overline{\widehat{\psi}(\xi)}dkd\xi
\]
\begin{equation}\label{qf}
\leq\int_{\varepsilon<|k|<\varepsilon_1}[\hat{\mathcal{L}}(k)]^{-1}dk+\sigma v_0
\left(\int_{\varepsilon<|k|<\varepsilon_1}[\hat{\mathcal{L}}(k)]^{-1}dk\right)^2\to\infty \quad {\rm as} ~~\varepsilon\to 0,
\end{equation}
due to (\ref{it}). Hence, there are test functions for which the quadratic form (\ref{qf}) is positive, and therefore, the positive spectrum of $H$ is non-empty.

It remains to show that the positive spectrum of $H$ is discrete. The operators $(\mathcal{L}-\lambda)^{-1}$ in $L^2(Z^d)$ is analitic in $\lambda$ when $\lambda\notin [0,\infty)$, and the same is true for $T_\lambda:=V(\mathcal{L}-\lambda)^{-1}$ since $|V(x)|<C<\infty, ~x\in Z^d$. Hence the resolvent identity and the analytic Fredholm theorem will imply the discreteness of the positive spectrum of $H$ if we show that operators  $T_\lambda, ~\lambda\notin [0,\infty),$  are compact.   We represent $T_\lambda$ as the sum $T_\lambda=\alpha(x/n)T_\lambda+(1-\alpha(x/n))T_\lambda$ where $\alpha(x)=1$ when $|x|\leq 1, ~\alpha (x)=0$ when $|x|> 1$. The operator $\alpha(x/n)T_\lambda$ is finite-dimensional, and therefore it is compact, and
\[
\|(1-\alpha(x/n))T_\lambda\|_{L^2(Z^d)}\to 0   \quad {\rm as} \quad  n\to\infty
\]
since $V$ vanishes at infinity. Thus operators $T_\lambda, ~\lambda\notin [0,\infty),$  are compact as limits of compact operators $\alpha(x/n)T_\lambda$ as $n\to\infty$.

The proof is complete
\end{proof}
\begin{theorem}
Let condition (\ref{posit}) hold. Let operator $\mathcal{L}$ be a generator of a random walk and
\begin{equation}\label{con}
\int_{T^d}|k|[\hat{\mathcal{L}}(k)]^{-1}dk<\infty  \quad {\rm and} \quad \sum _{Z^d}|xV(x)|<\infty.
\end{equation}
Then operator $H$ defined in (\ref{ode1a}) with
\[
0<\sigma <\sigma_0=(2\int_{T^d}|k|[\hat{\mathcal{L}}(k)]^{-1}dk \sum _{Z^d}|xV(x)|)^{-1}
\]
has exactly one positive eigenvalue.
\end{theorem}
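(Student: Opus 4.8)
The plan is to bracket the number of positive eigenvalues of $H$. The lower bound is already available: the subsection's standing assumption (\ref{it}) is in force and (\ref{posit}) is assumed, so Theorem~\ref{t} gives that the positive spectrum of $H$ is discrete and non-empty. Everything therefore reduces to showing that $H$ has \emph{at most} one positive eigenvalue when $0<\sigma<\sigma_0$.

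First I would pass to a non-negative potential. Writing $V=V_+-V_-$ with $V_\pm\ge0$ and using $\sigma>0$, we have $H=\mathcal L+\sigma V\le\widetilde H:=\mathcal L+\sigma V_+$. Since $V_+\in\ell^1(Z^d)$ the multiplication operator $\sigma V_+$ is compact, so $\widetilde H$ and $\mathcal L$ have the same essential spectrum, contained in $(-\infty,0]$; by the min--max principle the number of eigenvalues above any level $E>0$ can only increase when $H$ is replaced by $\widetilde H$, and letting $E\downarrow0$ it is enough to prove that $\widetilde H$ has at most one positive eigenvalue. For $E>0$ the resolvent $(E-\mathcal L)^{-1}$ is bounded and non-negative, and the Birman--Schwinger principle identifies the number of eigenvalues of $\widetilde H$ in $(E,\infty)$ with the number of eigenvalues in $(1,\infty)$ of the compact non-negative operator $\mathcal B_E:=\sigma\,V_+^{1/2}(E-\mathcal L)^{-1}V_+^{1/2}$. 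Thus it suffices to establish the \emph{uniform} bound $\mu_2(\mathcal B_E)<1$ for all $E>0$, where $\mu_1(\mathcal B_E)\ge\mu_2(\mathcal B_E)\ge\cdots$ are the eigenvalues of $\mathcal B_E$ in decreasing order; this forces at most one eigenvalue in $(1,\infty)$ for every $E$, hence at most one positive eigenvalue of $\widetilde H$.

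The bound $\mu_2(\mathcal B_E)<1$ is the heart of the matter and is where (\ref{con}) and the value $\sigma_0$ enter. I would estimate $\mu_2(\mathcal B_E)$ by the min--max ratio over the codimension-one subspace orthogonal to $V_+^{1/2}$. For $\phi\perp V_+^{1/2}$, putting $g:=V_+^{1/2}\phi\in\ell^1(Z^d)$ one has $\sum_x g(x)=0$, i.e. $\widehat g(0)=0$, and
\[
\langle\mathcal B_E\phi,\phi\rangle=\sigma\,\langle(E-\mathcal L)^{-1}g,g\rangle=\frac{\sigma}{(2\pi)^d}\int_{T^d}\frac{|\widehat g(k)|^2}{E+|\hat{\mathcal L}(k)|}\,dk .
\]
Because $\widehat g(0)=0$ we may write $\widehat g(k)=\sum_x g(x)\big(e^{-i(k,x)}-1\big)$ and use the elementary bound $|e^{-i(k,x)}-1|\le\sqrt2\,|(k,x)|^{1/2}\le\sqrt2\,|k|^{1/2}|x|^{1/2}$, followed by Cauchy--Schwarz in the form $\sum_x|x|^{1/2}|g(x)|=\sum_x\big(|x|\,V_+(x)\big)^{1/2}|\phi(x)|\le\big(\sum_x|x|V_+(x)\big)^{1/2}\|\phi\|$. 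This gives $|\widehat g(k)|^2\le 2|k|\big(\sum_x|x|V_+(x)\big)\|\phi\|^2$, and since $E+|\hat{\mathcal L}(k)|\ge|\hat{\mathcal L}(k)|$ we obtain, for $\|\phi\|=1$,
\[
\langle\mathcal B_E\phi,\phi\rangle\le\frac{2\sigma}{(2\pi)^d}\Big(\int_{T^d}\frac{|k|}{|\hat{\mathcal L}(k)|}\,dk\Big)\Big(\sum_x|x|V_+(x)\Big)\le\frac{2\sigma}{(2\pi)^d}\Big(\int_{T^d}\frac{|k|}{|\hat{\mathcal L}(k)|}\,dk\Big)\Big(\sum_x|xV(x)|\Big),
\]
which is $<1$ precisely when $\sigma<\sigma_0$ (with $\sigma_0$ as in the statement, up to the normalisation convention for the Fourier transform). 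The right-hand side does not depend on $E$, which is exactly the uniformity needed. Together with Theorem~\ref{t}, this yields exactly one positive eigenvalue for $0<\sigma<\sigma_0$.

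\textbf{The step I expect to be the main obstacle} is the estimate $\mu_2(\mathcal B_E)<1$, and specifically making do with the first moment $\sum_x|xV(x)|<\infty$: a naive operator-norm bound on the ``error part'' of the Birman--Schwinger kernel $V_+^{1/2}(E-\mathcal L)^{-1}V_+^{1/2}$ would require the \emph{second} moment $\sum_x|x|^2V(x)$. What rescues the argument is (i) restricting the variational problem to the orthogonal complement of $V_+^{1/2}$, which removes the part of $(E-\mathcal L)^{-1}$ that blows up as $E\downarrow0$ and effectively replaces the Green's function by a ``second difference'' whose Fourier transform vanishes at $k=0$, and (ii) distributing the weight $|x|^{1/2}$ between $V_+(x)^{1/2}$ and $\phi(x)$ by Cauchy--Schwarz, so that only $\sum_x|x|V_+(x)$ survives. (If one prefers to avoid invoking the Birman--Schwinger counting principle, the counting step can be made self-contained: were $\widetilde H$ to have two positive eigenvalues $\lambda_1\ge\lambda_2>0$ with eigenfunctions $\phi_1,\phi_2$, then $\eta_i:=V_+^{1/2}\phi_i$ would span a two-dimensional subspace on which $\langle\mathcal B_{\lambda_2}\,\cdot\,,\,\cdot\,\rangle\ge\|\cdot\|^2$, using $\lambda_1\ge\lambda_2$ and the positivity of $(\lambda_2-\mathcal L)^{-1}$, contradicting $\mu_2(\mathcal B_{\lambda_2})<1$.)
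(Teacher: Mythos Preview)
Your argument is correct. Both your proof and the paper's combine Theorem~\ref{t} (existence) with a Birman--Schwinger/Bargmann-type count (at most one), but the executions differ. The paper simply quotes a ready-made estimate from \cite{molchanov2012bargmann},
\[
N_0(\sigma V)\le 1+2\sigma\sum_{x\in Z^d}|V(x)|\,\bigl|\lim_{\lambda\to+0}[R_\lambda(x)-R_\lambda(0)]\bigr|,
\]
and then bounds the regularized Green's kernel by $|x|\int_{T^d}|k|\,|\hat{\mathcal L}(k)|^{-1}dk$ using $|e^{ikx}-1|\le|kx|$. You instead reprove this from scratch: reduce to $V_+$ by monotonicity, apply Birman--Schwinger, and bound $\mu_2(\mathcal B_E)$ by testing on $\{V_+^{1/2}\}^\perp$. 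Your restriction to this codimension-one subspace is exactly the ``rank-one subtraction'' that underlies the cited Bargmann estimate (it replaces the divergent $R_\lambda(0)$ by the finite $R_\lambda(x)-R_\lambda(0)$), and your $|e^{i\theta}-1|\le\sqrt2\,|\theta|^{1/2}$ plus Cauchy--Schwarz trick is a neat way to land on the first moment $\sum|x|V_+(x)$ without passing through the kernel pointwise. The upshot: the paper's route is shorter because it outsources the work; yours is self-contained and arguably cleaner, and it even gives the slightly sharper constant with $\sum|x|V_+(x)$ in place of $\sum|xV(x)|$ before you relax it to match the stated $\sigma_0$. The $(2\pi)^{-d}$ discrepancy you flag is real but harmless---the paper is informal about Fourier normalisation here.
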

\begin{proof}
The following estimate for the number $N_0(V)$ of positive eigenvalues of operator $H$ follows from \cite{molchanov2012bargmann}, see Theorem 1.2:
\[
N_0(V)\leq1+2\sum_{x\in Z^d}|V(x)|\lim_{\lambda\to+0}[R_\lambda(x)-R_\lambda(0)]|,
\]
where  $R_\lambda(x-y)$ is the kernel of  the resolvent of operator $\mathcal{L}$.
The limit above does not exceed
\[
\int_{T^d}|e^{ikx}-1|[\hat{\mathcal{L}}(k)]^{-1}dk\leq\int_{T^d}|kx|[\hat{\mathcal{L}}(k)]^{-1}dk.
\]
The last two relations imply that $N_0(V)<2$ when  $\sigma<\sigma_0$. Thus $H$ has at most one eigenvalue in this case, and it exists due to Theorem \ref{t}.
\end{proof}
\subsection {Zero average potentials, one-dimensional case }
The goal of this subsection is to expand the important part of Theorem \ref{t} on the existence of positive eigenvalues to a particular case of operators $H$ with $\sum_{x\in Z^d}V(x)\geq0$ where the sum can be equal to zero. The operator $H$ will be a perturbation of the second derivative on the one-dimensional lattice. We would like to get this result also in the continuous case and therefore, we will consider simultaneously operators $H$ in $L^2(\mathbb R)$ and in $L^2(\mathbb Z)$ defined by the relations
\begin{equation}\label{ode1}
H\psi=\psi''(x)+\sigma V(x)\psi(x), \quad x\in\mathbb R ~{\rm or}~ \mathbb Z,  \quad {\rm where}    \quad \int_{-\infty}^\infty (1+|x|)|V(x)|dx<\infty, ~~V\not\equiv 0,
\end{equation}
Here and below, $\psi''(x)$ and the integrals must be replaced by $\psi(x-1)+\psi(x+1)-2\psi(x)$ and the corresponding infinite sums, respectively, when $ x\in\mathbb Z$. Let us stress that the assumption on the behavior of the potential at infinity in (\ref{ode1}) is stronger than in (\ref{ode1a}).
\begin{theorem}\label{tl}
If $\int_{-\infty}^\infty V(x)dx\geq0,~~V\not\equiv 0,$ then operator $H$ defined in (\ref{ode1}) has a positive eigenvalue $\lambda=\lambda(\sigma)$ for each $\sigma>0$.
\end{theorem}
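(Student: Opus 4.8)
The plan is to exhibit, for each $\sigma>0$, an explicit test function $\psi$ in $L^2$ (or $\ell^2$) for which the quadratic form $\langle H\psi,\psi\rangle$ is strictly positive; since the essential spectrum of $H$ is $(-\infty,0]$ (the potential term is a relatively compact perturbation of the free operator, being summable/integrable), positivity of the form on a one-dimensional subspace forces the discrete spectrum above $0$ to be non-empty. The difficulty, compared to Theorem \ref{t}, is that here $\int V$ may vanish, so the free part cannot be made arbitrarily small relative to the potential part by the crude estimate used in (\ref{qf}); in one dimension the resolvent of $-d^2/dx^2$ (resp. the discrete Laplacian) at energy $0$ grows linearly, which is exactly what we must exploit.

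The key computation is a second-order perturbation (Birman–Schwinger / variational) expansion. First I would look for a quasimode of the form $\psi(x)=1+\sigma\varphi(x)$, or more precisely work with a regularized family $\psi_R$ that equals such a function on $[-R,R]$ and is cut off smoothly outside, so that $\psi_R\in L^2$. For the genuine solution of $\psi''=-\sigma V\psi$ with $\psi\to 1$, one has $\psi'(x)=-\sigma\int_{-\infty}^x V(y)\psi(y)\,dy$, and since $\int V=0$ (the borderline case) this derivative is integrable at $+\infty$, so $\psi$ tends to finite limits $\psi(\pm\infty)$; one shows these limits are positive and computes $\langle H\psi_R,\psi_R\rangle$. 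The leading term is $-\int |\psi_R'|^2 + \sigma\int V|\psi_R|^2 = -\int|\psi_R'|^2 - \int \psi_R''\psi_R\cdot(\text{b.t.})$; after integration by parts the bulk contributions cancel against the equation and one is left with boundary terms of the form $\psi_R'(R)\psi_R(R)-\psi_R'(-R)\psi_R(-R)$ plus the cut-off error. The sign is governed by the quantity $-\sigma\int_{-\infty}^\infty\!\!\int_{-\infty}^\infty$ — concretely by $\sigma\big(\int x V(x)dx\big)$-type expressions — but the robust way to see positivity is the classical fact that a one-dimensional Schrödinger operator $-d^2/dx^2+\sigma V$ with $\int V\le 0$ (note the sign: our $+\sigma V$ with $\int V\ge 0$ corresponds to the attractive case) always has a bound state below the edge; here "below the edge" means above $0$ because of the overall sign convention in (\ref{ode1}), where $H=\psi''+\sigma V$ has top of essential spectrum at $0$.

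Concretely I would proceed as follows. (i) Reduce to the Birman–Schwinger kernel: $\lambda>0$ is an eigenvalue iff $1$ is an eigenvalue of $K_\lambda=\sigma\,|V|^{1/2}(\lambda-\partial^2)^{-1}(\mathrm{sgn}\,V)|V|^{1/2}$ (with the lattice Laplacian in the discrete case), using that $(\lambda-\partial^2)^{-1}$ has the explicit kernel $\tfrac{1}{2\sqrt\lambda}e^{-\sqrt\lambda|x-y|}$ on $\mathbb R$ (and the analogous $\tfrac{1}{\ldots}r^{|x-y|}$ kernel on $\mathbb Z$, with $r=r(\lambda)\to 1$). (ii) As $\lambda\to 0^+$, expand this kernel: $\tfrac{1}{2\sqrt\lambda}e^{-\sqrt\lambda|x-y|}=\tfrac{1}{2\sqrt\lambda}-\tfrac{|x-y|}{2}+O(\sqrt\lambda)$. (iii) The singular term $\tfrac{1}{2\sqrt\lambda}$ acts on $|V|^{1/2}$ as the rank-one operator with profile $|V|^{1/2}$; its contribution to $K_\lambda$ tested against $f=(\mathrm{sgn}\,V)|V|^{1/2}$ is $\tfrac{\sigma}{2\sqrt\lambda}\big(\int V\big)\ge 0$, which already diverges to $+\infty$ when $\int V>0$ (recovering Theorem \ref{t}), and is $\ge 0$ — in particular non-negative, hence harmless — when $\int V=0$. (iv) In the borderline case $\int V=0$, the $O(\lambda^{-1/2})$ term drops and one must show that the finite operator with kernel $-\tfrac12|x-y|$ (the regular part) has operator norm, when sandwiched by $|V|^{1/2}\cdots|V|^{1/2}$, large enough that $1$ is attained by $\sigma K_\lambda$ for $\lambda$ small; equivalently one shows $\langle f, (-\tfrac12|x-y|) f\rangle$ with $f=(\mathrm{sgn}\,V)|V|^{1/2}$ can be made to beat the bounded negative corrections. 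Here one uses $-\tfrac12\int\!\!\int |x-y|\,dV(x)\,dV(y)=\tfrac12\int\!\!\int\mathrm{sgn}(x-y)(x-y)'\ldots$ — integrating by parts twice and using $\int V=0$ turns this into $\int |W(x)|^2\,dx>0$ where $W(x)=\int_{-\infty}^x V$, and this is strictly positive because $V\not\equiv 0$ and $\int V=0$ force $W\not\equiv 0$, $W\in L^2$. That strict positivity, together with continuity of $K_\lambda$ in $\lambda$ and $\|K_\lambda\|\to$ this value, yields an eigenvalue $1$ of $\sigma K_\lambda$ for suitable $\lambda=\lambda(\sigma)>0$ for every $\sigma>0$.

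\textbf{Main obstacle.} The analytic heart, and the step I expect to cost the most, is making the Birman–Schwinger limit rigorous at the threshold $\lambda\to 0^+$: the free resolvent is not bounded on $L^2$ as $\lambda\to 0$ (the $\lambda^{-1/2}$ blow-up), so one cannot simply pass to the limit in the operator; one must renormalize by projecting off the constant/the $|V|^{1/2}$ direction and control the genuinely convergent "regular part" in an appropriate weighted space — e.g. the space with weight $(1+|x|)$, which is exactly why the hypothesis $\int(1+|x|)|V|<\infty$ (stronger than in Theorem \ref{tl}'s siblings) is imposed. The lattice case requires redoing the kernel expansion for the discrete resolvent, but this is a routine parallel; the only genuinely new point there is that $r(\lambda)^{|x-y|}\to 1$ and $\partial_\lambda(\cdots)$ produces the same $-\tfrac12|x-y|$ regular part up to a positive constant, so the $W\in L^2$ mechanism goes through verbatim.
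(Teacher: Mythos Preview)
Your Birman--Schwinger route is genuinely different from the paper's proof, though your opening paragraph (a quasimode $\psi=1+\sigma\varphi$ solving $H\psi=0$, cut off, then reading off boundary terms after integration by parts) is in fact very close to what the paper actually does. The paper constructs bounded Jost-type solutions $\psi_1,\psi_3$ of $H\psi=0$ on the two half-lines via Volterra integral equations (Lemma~\ref{ml}), patches them together with the linearly growing solutions $\psi_2,\psi_4$ into a compactly supported test function $\phi$ on $[-m,m]$ with $m=[\sigma^{-3}]$, and observes that the quadratic form equals $2\varepsilon$ where $\varepsilon=\phi'(+0)-\phi'(-0)$ is the derivative jump at the origin; expanding the initial data of $\psi_1,\psi_3$ gives $\varepsilon=\sigma\int V+\sigma^2(c^2+c_-^2)+O(\sigma^3)$ with $c^2=\int_0^\infty\bigl(\int_\eta^\infty V\bigr)^2d\eta$ and $c_-^2$ its mirror. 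In the borderline case $\int V=0$ this second-order coefficient is exactly your $\int|W|^2$, so the two arguments isolate the same strictly positive quantity. The paper also first reduces to arbitrarily small $\sigma$ via the monotonicity of $\sigma^{-1}\partial^2+V$ in $\sigma$, a step you do not invoke but which makes the ``for every $\sigma>0$'' conclusion immediate once small $\sigma$ is handled.

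One genuine imprecision in your step~(iv): for sign-changing $V$ the operator $K_\lambda=\sigma|V|^{1/2}(\lambda-\partial^2)^{-1}(\mathrm{sgn}\,V)|V|^{1/2}$ is not self-adjoint, so a Rayleigh quotient against $f=(\mathrm{sgn}\,V)|V|^{1/2}$ does not bound its eigenvalues, and that quadratic form is not the quantity $-\tfrac12\iint|x-y|V(x)V(y)\,dx\,dy$ you then compute. When $\int V=0$ the singular rank-one part of $K_\lambda$ is nilpotent, so the divergence of its spectral radius as $\lambda\to0^+$ is a second-order effect (eigenvalues of order $\lambda^{-1/4}$, not $\lambda^{-1/2}$); this is exactly the renormalization you flag as the main obstacle, and it is where essentially all the analytic work lies. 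The paper's explicit ODE construction sidesteps that obstacle entirely, at the price of being specific to the one-dimensional second-order setting.
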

\begin{remark} The proof below will be given in the continuous case $x\in\mathbb R$, but one needs only to replace integrals with sums for the proof to be valid in the lattice case.
\end{remark}\begin{remark} This statement for the operator on the half-axis $x>0$ with the boundary condition $\psi'(0)=0$ was proved in \cite{molchanov2022positive}
\end{remark}
We will need the following lemma (a similar statement can be found in \cite{molchanov2022positive}).
\begin{lemma}\label{ml}
When $|\sigma|\ll1$, there are solutions $\psi_1,\psi_2$ of the equation $H\psi=0$ on the half axis $x\geq0$, where $H$ is defined in   (\ref{ode1}), such that
\begin{equation}\label{as}
\psi_1(x)=1+O(\sigma), \quad \psi_1'(x)=O(\sigma), \quad x\geq0;~ \quad  \psi_1(0)=1,\quad \psi_1'(0)=\alpha(\sigma),
\end{equation}
\begin{equation}\label{as1}
\psi_2(x)=x(1+O(\sigma)), \quad \psi_2'(x)=1+O(\sigma), \quad x\geq0;~\quad  \psi_2(0)=0,\quad \psi_2'(0)=1.
\end{equation}
Here the remainder terms are uniform in $x$ and
\begin{equation}\label{alp}
\alpha(\sigma)=\sigma b+\sigma^2c^2+O(\sigma^3),  \quad b=\int_0^\infty V(\xi)d\xi, \quad c^2=\int_0^\infty \left(\int_\eta^\infty V(\xi)d\xi\right)^2d\eta.
\end{equation}
\end{lemma}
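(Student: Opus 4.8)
The plan is to construct $\psi_1$ and $\psi_2$ as fixed points of integral (Volterra) equations equivalent to $H\psi=0$ on $[0,\infty)$, then extract the asymptotics of the derivative at the origin. Writing the equation $\psi''=-\sigma V\psi$ in integrated form anchored at $x=0$ with the initial data $\psi_1(0)=1$, $\psi_1'(0)=\alpha(\sigma)$, one gets
\begin{equation}\label{volt1}
\psi_1(x)=1+\alpha(\sigma)x-\sigma\int_0^x (x-\eta)V(\eta)\psi_1(\eta)\,d\eta.
\end{equation}
For $\psi_1$ to be the bounded solution, the coefficient of the unbounded term $x$ must be compensated by the growing part of the integral; taking $x\to\infty$ and using $\int(1+|x|)|V|<\infty$ forces
\begin{equation}\label{alpfix}
\alpha(\sigma)=\sigma\int_0^\infty V(\eta)\psi_1(\eta)\,d\eta,
\end{equation}
and substituting \eqref{alpfix} back into \eqref{volt1} yields the genuinely bounded Volterra equation
\begin{equation}\label{volt1b}
\psi_1(x)=1+\sigma\int_x^\infty (\eta-x)V(\eta)\psi_1(\eta)\,d\eta .
\end{equation}
Because the kernel of \eqref{volt1b} has operator norm $O(\sigma)$ on $C_b[0,\infty)$ (the weight $(\eta-x)$ on $\eta\ge x$ is dominated by $\eta$, and $\int \eta|V(\eta)|d\eta<\infty$), the contraction mapping principle gives a unique bounded solution with $\psi_1=1+O(\sigma)$ uniformly in $x$, and differentiating \eqref{volt1b} gives $\psi_1'(x)=-\sigma\int_x^\infty V(\eta)\psi_1(\eta)d\eta=O(\sigma)$ uniformly. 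Then \eqref{alpfix} reads $\alpha(\sigma)=\sigma\int_0^\infty V(\eta)\psi_1(\eta)d\eta$, and to get the second-order term one iterates \eqref{volt1b} once: $\psi_1(\eta)=1+\sigma\int_\eta^\infty(\xi-\eta)V(\xi)d\xi+O(\sigma^2)$. Inserting this, integrating the double integral and symmetrizing (or integrating by parts) converts $\sigma^2\int_0^\infty V(\eta)\int_\eta^\infty(\xi-\eta)V(\xi)d\xi\,d\eta$ into $\sigma^2\int_0^\infty\big(\int_\eta^\infty V(\xi)d\xi\big)^2d\eta$, which is exactly $\sigma^2 c^2$; this produces \eqref{alp}. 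The construction of $\psi_2$ is the same but anchored with data $\psi_2(0)=0,\ \psi_2'(0)=1$: one writes $\psi_2(x)=x-\sigma\int_0^x(x-\eta)V(\eta)\psi_2(\eta)d\eta$, rescales $\psi_2=x\varphi$ so that $\varphi$ solves a Volterra equation with $O(\sigma)$ kernel and $\varphi=1+O(\sigma)$, giving \eqref{as1}; here no compensation condition is needed since linear growth is the desired behavior.

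The main obstacle is handling the borderline integrability: the natural weight $(x-\eta)$ or $(\eta-x)$ in the kernels is not bounded, so one must be careful that the relevant operator on $C_b[0,\infty)$ really is a contraction for small $\sigma$ — this works precisely because the hypothesis $\int(1+|x|)|V(x)|dx<\infty$ supplies the extra first moment that tames the linear weight. A secondary technical point is justifying the passage $x\to\infty$ in \eqref{volt1} that produces the compensation identity \eqref{alpfix}: one needs $\psi_1$ a priori bounded, which is obtained by first solving \eqref{volt1b} (where boundedness is built in) and only afterward checking that its solution also solves \eqref{volt1} with the value of $\alpha(\sigma)$ given by \eqref{alpfix}.

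In the lattice case $x\in\mathbb Z$ the second difference replaces $\psi''$; the discrete analogue of \eqref{volt1b} is a summation equation $\psi_1(x)=1+\sigma\sum_{\eta\ge x}(\eta-x)V(\eta)\psi_1(\eta)$ (with the appropriate shift convention), and the identical contraction argument applies with integrals replaced by sums, the moment condition $\sum(1+|x|)|V(x)|<\infty$ playing the same role. Thus both \eqref{as}, \eqref{as1} and the expansion \eqref{alp} hold verbatim, as asserted in the remark.
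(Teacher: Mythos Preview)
Your approach for $\psi_1$ is essentially the paper's---both reduce the problem to a Volterra equation anchored at $+\infty$ and solve by contraction in $C_b[0,\infty)$---but the execution has two linked slips that spoil the $\sigma^2$ coefficient.

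First, there is a sign error in \eqref{volt1b}: differentiating that equation twice gives $\psi_1''=\sigma V\psi_1$, not $\psi_1''=-\sigma V\psi_1$. The equation consistent with $H\psi=0$ is
\[
\psi(x)=1-\sigma\int_x^\infty(\eta-x)V(\eta)\psi(\eta)\,d\eta .
\]
More importantly, even with the sign fixed this equation does \emph{not} give a solution with value $1$ at the origin: evaluating at $x=0$ yields $\psi(0)=1-\sigma\int_0^\infty\eta V(\eta)\psi(\eta)\,d\eta=1-\sigma a+O(\sigma^2)$ with $a=\int_0^\infty\xi V(\xi)\,d\xi$. (Indeed, substituting \eqref{alpfix} into \eqref{volt1} gives $\psi(x)=1+\sigma\int_0^\infty\eta V\psi-\sigma\int_x^\infty(\eta-x)V\psi$, not \eqref{volt1b}.) The paper therefore calls this fixed point $\psi$, computes $\psi(0)=1-\sigma a+O(\sigma^2)$ and $\psi'(0)=\sigma b+\sigma^2(c^2-ab)+O(\sigma^3)$, and \emph{then} sets $\psi_1=\psi/\psi(0)$; the division is precisely what converts $c^2-ab$ into $c^2$. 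Your integration-by-parts claim that $\int_0^\infty V(\eta)\int_\eta^\infty(\xi-\eta)V(\xi)\,d\xi\,d\eta$ equals $c^2$ is incorrect---its value is $ab-c^2$---so your $\sigma^2$ term is off exactly because the renormalisation step is missing.

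For $\psi_2$ you take a genuinely different route: a direct Volterra equation with the rescaling $\psi_2=x\varphi$, whereas the paper uses reduction of order, $\psi_2(x)=\psi_1(x)\int_0^x\psi_1^{-2}(\xi)\,d\xi$. Both are valid; yours is self-contained, while the paper's is shorter once $\psi_1$ is in hand and makes \eqref{as1} an immediate consequence of \eqref{as}.
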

\begin{proof}
We look for a complementary bounded solution $\psi$ in the form $\psi=1+z(x)$. Then $z''+\sigma V(x)z=-\sigma V(x)$, and this equation can be reduced to the integral equation
\begin{equation}\label{eq1}
z(x)+\sigma Pz=-\sigma\int_x^\infty (\xi-x)V(\xi) d\xi, \quad x\geq 0, \quad {\rm where}    \quad Pz=\int_x ^\infty(\xi-x)V(\xi)z(\xi)d\xi.
\end{equation}
 Operator $P$ in the space $C$ of continuous bounded functions on the semi-axis $x\geq0$ is bounded, and the right-hand side in (\ref{eq1}) has the from $ -\sigma f$, where $f\in C$. Thus equation (\ref{eq1}) is uniquely solvable in $C$ when $|\sigma|\ll1$ and the solution is analytic in $\sigma$. Hence, 
 \begin{equation}\label{eq15}
 \psi=1-\sigma f+\sigma^2Pf+O(\sigma^3),
\end{equation}
where $-\sigma f$ is the right-hand side in (\ref{eq1}). Since (\ref{eq1}) also admits the differentiation in $x$, the latter expansion can be differentiated in $x$. This implies that $\psi$ satisfies the first two relations in (\ref{as}) and has the following initial data:
\[
\psi(0)=1-\sigma a+O(\sigma^2),\quad \psi'(0)=\sigma b+\sigma^2(-ab+c^2)+O(\sigma^3),
\]
where $a=\int_0^\infty \xi V(\xi) d\xi, $ and $b,~c^2$ are  defined in (\ref{alp}).
The evaluation of the coefficient for $\sigma^2$ in the expansion of $\psi'(0)$ is based on (\ref{eq15}) but it also involves an integration by parts, and more details can be found in \cite{molchanov2022positive} (see Lemma 3.3) if needed. 

Function $\psi_1$ can be defined as $\psi_1(x)=\psi(x)/\psi(0)$.
Function $\psi_2$ can be constructed as
\begin{equation*}
\psi_2(x)=\psi_1(x)\int_{0}^x \psi_1^{-2}(\xi)d\xi, \quad x\geq 0, \quad |\sigma|\ll1.
\end{equation*}
Then (\ref{as1}) follows from (\ref{as}).

\end{proof}

{\it Proof of Theorem \ref{tl}.} Theorem 3.1 from \cite{molchanov2012bargmann} implies that the positive spectrum of $H$ is discrete. Thus it is enough to show that the positive spectrum is not empty. It is enough to show the existence of a positive spectrum of $H$ for arbitrarily small $\sigma>0$ since it will imply its existence for all $\sigma>0$, see the paragraph before the statement of the theorem. Hence, the theorem will be proven if, for some sequence $\sigma=\sigma_n\to0, ~n\to\infty,$  we construct functions $\phi=\phi_n(x)\in C^1(\mathbb R)$ with compact supports such that
$$
\int_{-\infty}^\infty[-\phi'^2(x)+\sigma_n V(x)\phi^2(x)]dx>0.
$$

We define solutions $\psi_3(x), \psi_4(x)$ of equation (\ref{ode1}) on the negative half-axis $x\leq0$ similar to $\psi_1(x), \psi_2(x)$ constructed in Lemma \ref{ml} on the semi-axis $x\geq0$. In particular,
\begin{equation}\label{bet}
\psi_3(0)=1,\quad \psi_3'(0)=\beta(\sigma),\quad \psi_4(0)=0,\quad \psi_4'(0)=1,
\end{equation}
where
\[
\beta=-\sigma b_--\sigma^2(c_-)^2+O(\sigma^3), \quad b_-=\int_{-\infty}^ 0V(\xi)d\xi, \quad (c_-)^2=\int_{-\infty}^0 \left(\int_{-\infty}^\eta V(\xi)d\xi\right)^2d\eta.
\]
For example, one can find $\psi_1(x), \psi_2(x)$ on $x\geq0$ for the potential $V(-x)$ and put$$
\psi_3(x)= \psi_1(-x),\quad \psi_4(x)= \psi_2(-x), \quad x\leq0.
$$

Let $m=[\sigma^{-3}]$ be the integer part of $\sigma^{-3}$, where $0\leq\sigma\ll1$. 
We put
\begin{equation}\label{phi}
\phi=\left\{
                      \begin{array}{cc}
                        \psi_1(x)-\varepsilon_1\psi_2(x), &0\leq x\leq m, \\
                         \psi_3(x)+\varepsilon_2\psi_4(x), &-m\leq x\leq0, \\
                        0, & ~~~|x|>m,\\
                      \end{array}
                    \right.
\end{equation}
where $\varepsilon_{1,2}$ are positive solutions of the equations $\psi_1(m)-\varepsilon_1\psi_2(m)=0, ~\psi_3(-m)+\varepsilon_2\psi_4(-m)=0$, respectively. From the asymptotics of $\psi_i(x),~1\leq i\leq4,$ at infinity, it follows that solutions $\varepsilon_{1,2}$ exist and $\varepsilon_{1,2}=m^{-1}(1+O(\sigma))=O(\sigma^{-3})$. Note that $\phi(\pm m)=0$.

Since $ H\phi(x)=0, ~0<|x|<m,$ and $\phi(x)$ is continuous with $\phi(0)=1$, we have
$$
H\phi=\varepsilon \delta(x), \quad |x|<m, \quad \varepsilon=\phi'(+0)-\phi'(-0),
$$
where $\delta(x)$ is the delta function, and therefore,
$$
\int_{|x|<m}(H\phi(x))\phi(x)dx=\varepsilon.
$$
On the other hand, integration by parts implies
$$
\int_{|x|<m}(H\phi(x))\phi(x)dx=\int_{|x|<m}[-\phi'^2(x)+\sigma V(x)\phi^2(x)]dx-\varepsilon.
$$
Thus
\begin{equation}\label{fo}
\int_{\mathbb R}([\phi'^2(x)-\sigma V(x)\phi^2(x)]dx=2\varepsilon.
\end{equation}

Asymptotics of $\psi_i(x), ~1\leq i\leq 4$ as $x\to\pm 0$, see (\ref{as}), (\ref{as1}), (\ref{bet}), lead to
\[
\varepsilon=\sigma\int _\mathbb R V(x)dx+\sigma^2(c^2+(c_-)^2)+O(\sigma^3)-\varepsilon_1-\varepsilon_2
\]
\[
=\sigma\int _\mathbb R V(x)dx+\sigma^2(c^2+(c_-)^2)+O(\sigma^3).
\]
The coefficient for $\sigma$ here is not negative, and the coefficient for $\sigma^2$ is positive if $V\not\equiv0$. Hence, $\varepsilon>0$ when $\sigma>0$ is small enough. Thus, for that $ \sigma$,  the quadratic form in the left-hand side of (\ref{fo}) is positive, and therefore the positive spectrum of $H$ is not empty.

\qed

\bibliographystyle{amsplain}
\bibliography{mybib}

\end{document}